\newcommand{\be}{\begin{eqnarray} \begin{aligned}}
\newcommand{\ee}{\end{aligned} \end{eqnarray} }
\newcommand{\benn}{\begin{eqnarray*} \begin{aligned}}
\newcommand{\eenn}{\end{aligned} \end{eqnarray*} }
\newcommand*{\textfrac}[2]{{{#1}/{#2}}}
\newcommand*{\cA}{\mathcal{A}}
\newcommand*{\cH}{\mathcal{H}}
\newcommand*{\cS}{\mathcal{S}}
\newcommand*{\cX}{\mathcal{X}}
\newcommand*{\tr}{\mathop{\mathrm{tr}}\nolimits}
\newcommand*{\supp}{\mathrm{supp}}
\newcounter{protoCount}
\newcounter{protoList}
\newsavebox{\tmpbox}
\newlength{\protobox}
\newcommand{\bc}{\begin{center}}
\newcommand{\ec}{\end{center}}
\newcommand{\id}{\mathbb{I}}
\newcommand{\idchannel}{\mathsf{id}}
\newcommand{\Tr}{\mathop{\mathrm{tr}}\nolimits}
\newtheorem{theorem}{Theorem}[section]
\newtheorem{lemma}[theorem]{Lemma}
\newcommand{\hil}{\mathcal{H}}
\def\Complex{\mathbb{C}}
\def\id{\mathbb{I}}
\def\01{\{0,1\}}
\newcommand{\ket}[1]{|#1\rangle}
\newcommand{\outp}[2]{|#1\rangle\langle#2|}
\newcommand{\proj}[1]{|#1\rangle\langle#1|}
\newcommand{\inp}[2]{\langle{#1}|{#2}\rangle} % inproduct, < | >
\newcommand{\states}{\mathcal{S}}
\newcommand{\bounded}{\mathcal{B}}
\newcommand{\hout}{\mathcal{H}_{\rm out}}
\newcommand{\hin}{\mathcal{H}_{\rm in}}
\newcommand{\setX}{\mathcal{X}}
\newcommand{\assign}{:=}
\newcommand*{\mr}[1]{\mu_{\alpha,Q}(#1)}
\newcommand{\capa}{\chi^*}
\newcommand{\setA}{\mathcal{A}}
\begin{document}

\title{A strong converse for classical channel coding using entangled inputs}
\author{Robert \surname{K{\"o}nig}}
\email[]{rkoenig@caltech.edu}
\affiliation{Institute for Quantum Information, Caltech, Pasadena CA 91125, USA}
\author{Stephanie \surname{Wehner}}
\email[]{wehner@caltech.edu}
\affiliation{Institute for Quantum Information, Caltech, Pasadena CA 91125, USA}
\begin{abstract}
A fully general strong converse for channel coding states that when the rate of sending classical information exceeds the capacity of a quantum channel,  the probability of correctly decoding goes to zero exponentially in the number of channel uses, even when we allow code states which are entangled across several uses of the channel.  Such a statement was previously only known for classical channels and the quantum identity channel. By relating the problem to the additivity of minimum output entropies, we show that a strong converse holds for a large class of channels, including all unital qubit channels, the $d$-dimensional depolarizing channel and the Werner-Holevo channel. This  further justifies the interpretation of the classical capacity as a sharp threshold for information-transmission.
\end{abstract}
\maketitle

A fundamental problem in quantum information theory is the transmission of classical information over (noisy) quantum channels. 
As a simple example, suppose we send $M$~classical bits
 using a qubit identity channel $n$~times. Clearly~\footnote{By coding into orthogonal states.}, this can be done reliably if  $M\leq n$, but if the number of classical bits exceeds the number of qubits sent ($M>n$) we are no longer able to recover the encoded information with perfect accuracy~\footnote{This statement can be seen as a special case of  Holevo's channel coding theorem~\cite{holevo:bound}.}. 
This situation is  analogous to the problem of information transmission over a noisy classical channel.
Here, there exists a constant~$C$, called the {\em classical capacity}, 
which determines the maximal number of classical bits that can be sent reliably per channel use: 
by using the channel $n$~times, we can reliably transmit $M$~bits if and only if the {\em rate} $R=\frac{M}{n}$ satisfies $R\leq C$ in the asymptotic limit. This is known as the {\em coding theorem} due to Shannon~\cite{shannon:coding}.
For example, for the binary bit flip channel, which flips an input bit with probability $p$, this constant is given by $C= 1 - h(p)$, where $h$ is the binary entropy function. 
The unifying concept for both scenarios is that of the classical capacity~$C$. For the qubit identity channel 
Holevo's seminal result~\cite{holevo:bound} shows that the classical capacity is equal to~$1$.

In fact, for both the qubit identity channel and any classical channel, the classical capacity~$C$ imposes a 
sharp bound on our ability to recover classical information sent over the channel:
On the one hand if $R \leq C$, then it is possible to send $nR$~classical bits by using the channel $n$ times 
in such a way that the probability $P_{succ}$ of successful 
decoding goes to $1$ exponentially as $n\rightarrow \infty$. This is also referred to as the {\em achievability of the capacity}.  On the other hand, if $R > C$, then for any encoding and decoding scheme, $P_{succ}$ is exponentially small in the difference $n(R-C)$. This is referred to as the {\em strong converse} of the coding theorem for these channels. 

For classical noisy channels, the strong converse was established by Wolfowitz~\cite{wolfowitz}. For the qubit identity channel $\idchannel_2\equiv \idchannel_{\bounded(\mathbb{C}^2)}$, 
the argument is rather simple:
Suppose we encode a uniformly distributed $nR$-bit string $X\in\{0,1\}^{nR}$ using a family of $2^{nR}$ states $\{\rho_x\}_{x=1}^{2^{nR}}$ on $(\mathbb{C}^2)^{\otimes n}$ (i.e., of $n$~qubits). Then, for any decoding POVM $\{E_x\}_{x=1}^{2^{nR}}$ on $(\mathbb{C}^2)^{\otimes n}$, the average success probability of correctly decoding is bounded by
\begin{align*}
P_{succ}^{\idchannel_2}(n,R) &=\frac{1}{2^{nR}}\sum_x \tr(E_x\rho_x)\leq \frac{1}{2^{nR}}\sum_x \tr(E_x)\\
&=2^{-n(R-1)}\ .
\end{align*}
Here, we used the operator inequality $\rho_x\leq \id_{(\mathbb{C}^2)^{\otimes n}}$ for every $x$, and the fact that the operator elements of a POVM sum to the identity.
Due to the strong converse property, we can regard the capacity~$C$ as an exact measure of the information-carrying power of any classical channel and the quantum identity channel. 

Unfortunately, this appealing operational interpretation of the classical capacity~$C$ is not quite as complete for general quantum channels. While the achievability of the capacity has been established in~\cite{holevo:achieve,schumacher:achieve} (building on~\cite{hausladen:achieve}), only a {\em weak converse} has been shown without assumptions~\cite{holevo:bound}.
It merely states that for rates $R>C$ above the capacity, the success probability is bounded away from $1$. This is in contrast to a strong converse, which shows that this probability goes to zero exponentially, in the limit as~$n$ goes to infinity.

Here, we are interested in the validity of the strong converse property for a general quantum channel. Establishing such a converse is more difficult than for classical channels for the same reason it is difficult to compute the classical capacity of a quantum channel: We have to take into account the possibility that entanglement over several uses of the channel may help to increase the probability of successful decoding. Indeed, a recent breakthrough result by Hastings~\cite{hastings:additivity} shows that using entangled states can be advantageous. Formally, this is expressed by the product-state capacity~$C^{prod}_\Phi$: This is defined in the same way as the capacity, but with the restriction that the input states to the 
channel $\Phi^{\otimes n}$ have to be of tensor product form. Hasting's result 
shows that there are channels~$\Phi$ with $C^{prod}_\Phi <C_\Phi$.

In light of the advantage of entanglement for coding, it is natural to ask whether entanglement may invalidate the strong converse property: In particular, we study whether allowing arbitrary (entangled) input states does not affect the exponential decay of the success probability. Previous studies of  the region $R>C_\Phi$ were restricted to the case where the inputs are not entangled across different uses of the channel~\cite{ogawa:converse,winter:converse}, and are thus conceptually similar to the study of the achievability of
the product state capacity $C^{prod}_\Phi$ instead of the more general~$C_\Phi$.

\section{Main Result} 
Here, we prove a strong converse for a large number of quantum channels $\Phi$.
In particular, our result applies to 
\begin{enumerate}[(i)]
\item\label{it:first}
the qudit depolarizing channel 
\begin{align}
\Delta_r(\rho) &=r\rho+ (1-r)\frac{\id}{d}\ ,\label{eq:depolarizingchannel}
\end{align}
replacing any input state with the fully mixed state with probability $(1-r)$ for $-1/(d^2-1) \leq r \leq 1$,
\item\label{it:second}
any unital qubit channel~\footnote{A unital channel
maps the completely mixed state on $\cH_{in}$ to the completely mixed state on $\cH_{out}$.}, and, more generally, 
\item\label{it:third} any channel
which has additive minimum output $\alpha$-entropy $S^{\min}_\alpha$ for $\alpha \geq 1$ (close to $1$) as defined below~\footnote{The additivity property~$S_\alpha^{\min}(\Phi^{\otimes n}) = n \cdot S_\alpha^{\min}(\Phi)$ is equivalent to the multiplicativity of the  maximum output $\alpha$-norm defined as $\nu_\alpha(\Phi) = \max_\rho \|\Phi(\rho)\|_\alpha$, where $\|A\|_\alpha=\left(\tr|A|^\alpha\right)^{1/\alpha}$.},
and the following covariance property: 
there is a pair of unitary representations of some group $G$ on the input space $\cH_{in}$ and the output space $\cH_{out}$, 
respectively, such that
\begin{align*}
g \Phi(\rho) g^\dagger &= \Phi(g\rho g^\dagger)\qquad\textrm{ for all }g\in G\ ,
\end{align*}
where the representation on $\cH_{out}$ is irreducible. An example of such a channel is the Werner-Holevo channel~\cite{werner:channel}.
\end{enumerate}

More formally, we are concerned with (noisy) quantum channels, i.e., completely positive trace-preserving maps (CPTPM) $\Phi: \bounded(\hin) \rightarrow \bounded(\hout)$. Throughout, we restrict our attention to finite-dimensional Hilbert spaces $\hin$ and $\hout$. A code of rate $R$ for $\Phi$  specifies (for every~$n$) a family $\{\rho_x\}_{x=1}^{2^{nR}}$ of states on $\hin^{\otimes n}$, where
 $\rho_x$ is the quantum codeword associated with the classical  message $x \in \{1,\ldots,2^{nR}\}$. A corresponding decoder is  a POVM $\{E_x\}_{x=1}^{2^{nR}}$ on $\hout^{\otimes n}$.
We are interested in the average success probability of decoding correctly, that is, the quantity
\begin{align}\label{eq:success}
P_{succ}^\Phi(n,R) &=\frac{1}{2^{nR}} \sum_{x=1}^{2^{nR}}\tr(E_x\Phi^{\otimes n}(\rho_x))\ .
\end{align} 
In this terminology, we  show the following:

\bigskip
\textbf{Theorem} 
\emph{Let $\Phi$ be a CPTPM described by (i)--(iii), and let $C_\Phi$ be its classical capacity. There exists a constant $\gamma>0$ such that the following holds: For any code of rate $R$, and any corresponding decoder, the success probability $P_{succ}^\Phi(n,R)$ is upper bounded by $2^{-\gamma\cdot n (R-C_\Phi)}$ (for sufficiently large $n$).}
 
\bigskip
Thus the success probability decays exponentially when coding at rates above the capacity.

\bigskip
\textbf{Background} Before giving a short overview of our proof, let us briefly recall how the study of the achievability of rates  below the capacity can be subdivided into three major components: one begins by setting up a connection between the operational problem of coding and an entropic quantity. More precisely, one can show that there exists codes such that
the success probability  has a behavior of the form
\begin{align}
P^{\Phi}_{succ}(n,R)&=1-e^{-n\delta  (\bar{\chi}^*(\Phi)-R)}\ ,\label{eq:psuccessbelowcapacity}
\end{align}
with $\delta \geq 0$ for rates $R$  smaller than
\begin{align}\label{eq:barchidef}
\bar{\chi}^*(\Phi)\assign\lim_{n\rightarrow \infty}\frac{1}{n}\chi^*(\Phi^{\otimes n})\ .
\end{align}
This quantity is the {\em regularized} version of the Holevo-quantity of the channel $\Phi$, i.e.,
\begin{align}\label{eq:unregularized}
\chi^*(\Phi) &\assign \max_{\{p_x,\rho_x\}_x}
\chi(\{p_x,\Phi(\rho_x)\}_x)\ ,
\end{align}
which in turn is defined in terms of the  Holevo quantity of an ensemble $\{p_x,\sigma_x\}_x$, given by
\begin{align}
\chi(\{p_x,\sigma_x\}_x) \assign S\left(\sum_x p_x\sigma_x\right)-\sum_x p_xS(\sigma_x)\ .\label{eq:holevoquantity}
\end{align}
 This is the first step in the study of the coding problem. It reduces the operational problem of coding to the study of the quantity~\eqref{eq:barchidef}. In particular,~\eqref{eq:psuccessbelowcapacity} tells us that we can code with exponentially small error at any rate $R<\bar{\chi}^*(\Phi)$.

The second component is to study general properties of the quantity $\bar{\chi}^*(\Phi)$. The computation of this value is drastically simplified in cases where the Holevo quantity is \emph{additive}, that is, 
\begin{align}
\chi^*(\Phi^{\otimes n-1}\otimes\Phi) &=\chi^*(\Phi^{\otimes n-1})+\chi^*(\Phi)
\label{eq:additivity}
\end{align}
for all $n > 1$, since this implies $\bar{\chi}^*(\Phi)=\chi^*(\Phi)$.  Note
that part of this statement, the so-called {\em subadditivity}
\begin{align*}
\chi^*(\Phi^{\otimes n-1}\otimes\Phi) &\geq \chi^*(\Phi^{\otimes n-1})+\chi^*(\Phi)\ ,
\end{align*}
is trivial, as it corresponds to restricting to product states. Showing whether or not~\eqref{eq:additivity} holds for a given channel $\Phi$ is 
a called an  {\em additivity problem}. It has several equivalent formulations: for example, the quantity  $\chi^*(\Lambda)$, for any CPTPM~$\Lambda$,
 can be reexpressed in terms of the relative entropy $D$ as
\begin{align}
\chi^*(\Lambda) &=\min_\sigma\max_\rho D(\Lambda(\rho)\|\Lambda(\sigma))\label{eq:accrelative}
\end{align}
as shown in~\cite{schumacher:capacityVSrelative}. The physical significance of the additivity property~\eqref{eq:additivity} stems from the fact that~\eqref{eq:barchidef} is a formula for the capacity $C_\Phi$, while~\eqref{eq:unregularized} is equal to the product state capacity $C^{prod}_\Phi$~\footnote{In addition to~\eqref{eq:psuccessbelowcapacity}, Fano's inequality gives a (weak) converse involving the quantity~\eqref{eq:barchidef}, which shows that $C_\Phi=\bar{\chi}^*(\Phi)$.}. Additivity of $\chi^*$ for a channel~$\Phi$ therefore implies that there is no advantange in using entangled states for coding in the asymptotic limit. 

Finally, one needs to investigate the additivity problem (cf.~\eqref{eq:additivity}), which is poorly understood in general. King~\cite{king:depol} has shown additivity of $\chi^*$ for the depolarizing channel~\eqref{eq:depolarizingchannel}. His proof 
uses the fact that for any covariant channel $\Phi$, the Holevo quantity is related to the 
minimum output entropy~\cite{holevo:covariant} 
\begin{align}
S^{\min}(\Phi) &\assign\min_\rho S(\Phi(\rho))\label{eq:minimumoutputentropy}
\end{align}
by
\begin{align}
\chi^*(\Phi) &=\log d_{out}-S^{\min}(\Phi)\ ,\label{eq:chiexpr}
\end{align}
where $d_{out}$ is the dimension of the output space $\hout$. King then establishes the additivity of $S^{\min}$ for the depolarizing channel $\Delta_r$ by showing that the related minimum $\alpha$-R\'enyi-entropies $S_\alpha^{\min}$ (defined below) are additive
for~$\Delta_r$. This implies additivity of $\chi^*$, and leads to an explicit formula for the capacity $C_{\Delta_r}$.

\bigskip
\textbf{Proof outline} Our approach to coding at rates above the capacity has the same overall structure as the study of the achievability explained above. The strong converse theorem is obtained by (a)~relating the decoding probability to entropic quantities, (b)~rephrasing the resulting additivity problems and finally (c)~showing that the channels~(i)--(iii) satisfy these additivity properties. 

The relevant quantities in our case turn out to be the following R\'enyi-entropic versions of the above quantities.
For $\alpha \geq 1$, we use~\footnote{Note that the expression $\sigma^{1-\alpha}$ in $D_\alpha(\rho\|\sigma)$ requires to augment definition~\eqref{eq:dalphadef}: We will only invert $\sigma$ on its support~$\supp(\sigma)$, and set $D_\alpha(\rho\|\sigma)=\infty$ if $\supp(\sigma)\not\subset\supp(\rho)$. }
\begin{align}
S_\alpha(\rho) &\assign \frac{1}{1-\alpha}\tr(\rho^\alpha)\nonumber\\
D_\alpha(\rho\|\sigma)&\assign \frac{1}{\alpha-1}\log\tr(\rho^\alpha\sigma^{1-\alpha})\label{eq:dalphadef}\\
\chi_\alpha(\{p_x,\sigma_x\}_x) &\assign \frac{\alpha}{\alpha-1} \log\tr\left(\sum_x p_x \sigma_x^\alpha\right)^{1/\alpha}\ .\nonumber
\end{align}
We also need the corresponding derived quantities $\chi^*_\alpha(\Phi)$, $\bar{\chi}^*_\alpha(\Phi)$, and $S^{\min}_\alpha(\Phi)$ defined as in~\eqref{eq:unregularized},~\eqref{eq:barchidef} and~\eqref{eq:minimumoutputentropy}, respectively.

We now give a sketch of the proof, following the three steps (a)--(c) outlined above (Details can be found in the appendix). First, 
we relate our operational problem 
to the regularized quantity $\bar{\chi}_\alpha^*(\Phi)$ by  showing that for any code of rate $R$, we have 
\begin{align} \tag{\ref{eq:psuccessbelowcapacity}$^\prime$}
P_{succ}^\Phi(n,R)\lesssim 2^{-n(1-\frac{1}{\alpha})(R-\bar{\chi}_\alpha^*(\Phi))}\textrm{ for all }\alpha\geq 1\ \label{eq:psuccupperbound}
\end{align}
for sufficiently large $n$. This is the analog of~\eqref{eq:psuccessbelowcapacity}. It shows that for any rate $R>\bar{\chi}_\alpha^*(\Phi)$, the success probability decays exponentially with~$n$.

Clearly, the quantity~$\bar{\chi}_\alpha^*(\Phi)$ again has a particularly simple form if  $\chi_\alpha^*$ is additive as in~\eqref{eq:additivity}. To study additivity of the quantity $\chi_\alpha^*(\Phi)$, the second step of our proof is to derive the following analog of~\eqref{eq:accrelative}, essentially following the steps of Schumacher and 
Westmoreland~\cite{schumacher:capacityVSrelative}
\begin{align}
\min_{\sigma_{out}}\max_{\rho} D_\alpha(\Lambda(\rho)\|\sigma_{out})&\leq \chi_\alpha^*(\Lambda)\nonumber\\
&\leq \min_{\sigma_{in}}\max_\rho D_\alpha(\Lambda(\rho)\|\Lambda(\sigma_{in}))\ . \tag{\ref{eq:accrelative}$^\prime$}\label{eq:chidmod}
\end{align}
As before, additivity of the quantity $\chi_\alpha^*(\Phi)$ is intimately connected to the classical capacity $C_\Phi$: As shown by Ogawa and 
Nagaoka~\cite{ogawa:converse}, for every $\varepsilon>0$, we have  $\chi_\alpha^*(\Phi)<C_\Phi+\varepsilon$ for all~$\alpha\geq 1$ in some neighborhood of~$1$. In particular, with~\eqref{eq:psuccupperbound}, this shows that additivity of~$\chi_\alpha^*$ for all~$\alpha$ in the vicinity of~$1$ implies a strong converse, that is, an exponential decay of the success probability for any rates~$R>C_\Phi$. 
Since it is known~\cite{ogawa:converse,winter:converse} that coding with product states at rates above the capacity leads to the same exponential behavior, we can conclude 
that entanglement provides no operational advantage. 

Finally, we show additivity of $\chi_\alpha^*$ for the special class of channels $\Phi$  satisfying our assumptions (i)--(iii). For these channels,  the covariance properties imply that both the lower and upper bound in~\eqref{eq:chidmod} coincide and are attained when $\sigma_{in}$ and $\sigma_{out}$ are completely mixed. By definition, this means that these channels satisfy the R\'enyi-entropic version 
\begin{align} \tag{\ref{eq:chiexpr}$^\prime$}
\chi_\alpha^*(\Phi) &=\log d_{out}-S_\alpha^{\min}(\Phi)\ \label{eq:chiexprx}
\end{align} 
of~\eqref{eq:chiexpr}.
Additivity of $\chi_\alpha^*$ is shown by combining~\eqref{eq:chidmod} with~\eqref{eq:chiexprx}, as follows.  For $\sigma_{in} = \id/d$ equal to the fully mixed state, we get 
\begin{align}
\chi_\alpha^*(\Phi^{\otimes n})&\leq \max_\rho D_\alpha(\Phi^{\otimes n}(\rho)\|
\Phi^{\otimes n}((\id/d_{in})^{\otimes n}))\nonumber\\
&=\log d_{out}^n-S_\alpha^{\min}(\Phi^{\otimes n})\nonumber\\
&=n\log d_{out}- n\cdot S_\alpha^{\min}(\Phi)\ .\label{eq:upperboundchialpha}
\end{align}
In the last step,  we used the additivity of the minimum output $\alpha$-entropy $S_\alpha^{\min}$ for the channels of interest for $\alpha\geq 1$ 
close to $1$ (cf.~\cite{king:unital} for qubit unital channels, \cite{king:depol} for the depolarizing channel, and \cite{datta:werner,alicki:werner,yura:werner} for the Werner-Holevo 
channel). By the subadditivity property of the quantity $\chi_\alpha^*$, we know that $n\chi_\alpha^*(\Phi)\leq \chi_\alpha^*(\Phi^{\otimes n})$. Combining this with~\eqref{eq:chiexprx} and~\eqref{eq:upperboundchialpha} proves additivity, that is, $\bar{\chi}_\alpha^*(\Phi)=\chi_\alpha^*(\Phi)=\log d_{out}-S_{\min}(\Phi)$. This concludes the proof of our main result.
\section{Conclusion}
In summary, we have shown that for a large class of practically relevant quantum channels, the probability of reliably transmitting $nR$~classical bits by $n$~uses of the channel has an asymptotic behavior of the form $2^{-\gamma n(R-C)}$ for some constant~$\gamma>0$ when coding at rates~$R$ above the classical capacity $C$. Such a statement was previously only known for classical channels and the identity channel. 
Our result has direct practical applications
to quantum cryptography, especially in the so-called noisy-quantum-storage model~\cite{prl:noisy, noisy:robust}, where the adversary is restricted to using low-capacity channels. For these applications, some knowledge about the optimal constant~$\gamma$ will be useful. Our work provides bounds on this value, about which little is known 
even in the classical case.  

On a more fundamental level, our result implies that for the quantum channels considered, using entanglement provides no advantage in all rate regimes. These channels therefore behave just as classical channels with respect to the transmission of classical information.  
Establishing strong converses for a wider class of channels is of fundamental importance, as this is the natural counterpart of the achievability statement of the capacity. Of particular interest in this context are channels whose Holevo-quantity is non-additive~\cite{hastings:additivity}.  While we do not explicitly use this fact, the Holevo-quantity is additive for the channels considered in this paper.

Showing that  the success probability of decoding has an exponential behavior both below and above the capacity confirms our interpretation of the classical capacity as the single relevant measure of the usefulness of a quantum channel for classical communication.

\acknowledgments
We acknowledge support by  NSF grants PHY-04056720 and PHY-0803371.

\newpage

\setcounter{section}{1}

In this appendix, we provide a detailed proof of the strong converse theorem for all channels described by~\eqref{it:first}--\eqref{it:third}. Let us first argue that it suffices to consider channels of the type~\eqref{it:third}, i.e., covariant channels. Indeed, 
the $d$-dimensional depolarizing channel~\eqref{eq:depolarizingchannel} is just a special example of~\eqref{it:third}, since it has additive minimum 
output $\alpha$-entropy~\cite{king:depol} and is covariant with respect to the unitary group. For unital qubit channels, first observe that the quantity~$\bar{\chi}_\alpha^*(\Phi)$ of interest  remains unchanged when 
considering a unitarily equivalent channel, i.e., one which additionally 
conjugates  the input and output with fixed unitaries $U_{in}$ and $U_{out}$, respectively. 
It has been shown~\cite{ruskai:qubit,datta:outputPurity} that any one qubit unital channel is unitarily equivalent to a Pauli diagonal channel
\begin{align*} 
\mathcal{F}(\rho) &=\sum_{j=0}^3 \alpha_j \sigma_j \rho \sigma_j
\end{align*}
for some $\alpha_j\geq 0$, where $\{\sigma_j\}_j$ are the Pauli matrices.  This channel is an instance of~\eqref{it:third}, since it has additive 
minimum output $\alpha$-entropy~\cite{king:unital} and is invariant with respect to the irreducible action of the Pauli group on $\mathbb{C}^2$. We will therefore restrict our attention to covariant channels in this appendix.
We now state our main result more formally:

\begin{theorem}[Strong converse]\label{thm:maintheorem}
Let $\Phi: \bounded(\hin) \rightarrow \bounded(\hout)$ be a CPTPM satisfying
\begin{enumerate}[(A)]
\item\label{eq:covarianceproperty}
$\Phi$ is covariant with respect to a pair of unitary representations of a compact group $G$ on $\hin$ and $\hout$, where the representation on $\hout$ is irreducible.
\item\label{eq:minimumoutputproperty}
The minimum output entropy $S_\alpha^{\min}(\Phi)$ is additive for $\alpha \geq 1$ (for $\alpha$ close to 1).
\end{enumerate}
Then the strong converse holds for $\Phi$, that is, $P_{succ}^\Phi(n,R)\rightarrow 0$ exponentially for 
any rate $R>\bar{\chi}^*(\Phi)=C_\Phi$.
\end{theorem}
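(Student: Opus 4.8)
\emph{Proof strategy.} The plan is to carry out, for an arbitrary channel $\Phi$ satisfying (A) and (B), the three steps (a)--(c) of the sketch above; recall that the preamble of this appendix has already reduced the depolarizing and unital‑qubit cases to covariant channels of exactly this type. Step (a) will convert the operational bound on $P^\Phi_{succ}(n,R)$ into a bound in terms of $\bar\chi^*_\alpha(\Phi)$; step (b) will establish the R\'enyi relative‑entropy sandwich \eqref{eq:chidmod}; step (c) will use the covariance hypothesis (A) together with the additivity hypothesis (B) to evaluate $\bar\chi^*_\alpha(\Phi)$, after which letting $\alpha\downarrow 1$ completes the argument.

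\emph{Step (a).} Fix $\alpha$ in a right neighbourhood of $1$, a code $\{\rho_x\}_{x=1}^N$ with $N=2^{nR}$, and a decoder $\{E_x\}_{x=1}^N$; set $\omega_x=\Phi^{\otimes n}(\rho_x)$. I would view decoding as a binary hypothesis test on the classical--quantum state $\rho_{XB}=\frac1N\sum_x\proj{x}\otimes\omega_x$: the two‑outcome test ``$\hat X=X$'' accepts with probability $P^\Phi_{succ}(n,R)$ under $\rho_{XB}$, and with probability $\frac1N\sum_x\tr(E_x\sigma)=1/N$ under the product state $\pi_X\otimes\sigma$ (with $\pi_X=\id_X/N$ and $\sigma$ any state on $\hout^{\otimes n}$). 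Applying the data‑processing inequality for $D_\alpha$ of \eqref{eq:dalphadef} (valid for $\alpha$ near $1$; see~\cite{petz:alpha}) across the measurement, then the elementary binary H\"older inequality, gives \[ P^\Phi_{succ}(n,R)\ \le\ N^{-(1-\frac1\alpha)}\,2^{(1-\frac1\alpha)\,D_\alpha(\rho_{XB}\|\pi_X\otimes\sigma)}\,. \] Since $D_\alpha(\rho_{XB}\|\pi_X\otimes\sigma)=\frac1{\alpha-1}\log\bigl(\frac1N\sum_x\tr(\omega_x^\alpha\sigma^{1-\alpha})\bigr)$, choosing $\sigma$ to be the minimiser in the right‑hand side of \eqref{eq:chidmod} for $\Lambda=\Phi^{\otimes n}$ bounds every $\tr(\omega_x^\alpha\sigma^{1-\alpha})$ by $2^{(\alpha-1)\chi^*_\alpha(\Phi^{\otimes n})}$, hence $P^\Phi_{succ}(n,R)\le 2^{-n(1-\frac1\alpha)(R-\frac1n\chi^*_\alpha(\Phi^{\otimes n}))}$; by the (trivial) superadditivity of $\chi^*_\alpha$ under tensor powers one has $\frac1n\chi^*_\alpha(\Phi^{\otimes n})\le\bar\chi^*_\alpha(\Phi)$, which yields \eqref{eq:psuccupperbound}.

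\emph{Steps (b) and (c).} For (b), I would reproduce the Schumacher--Westmoreland derivation of \eqref{eq:accrelative} with $D$ replaced by $D_\alpha$: the lower bound follows by data processing against the dephased average state, and the upper bound by fixing the reference $\Lambda(\sigma_{in})$ and using joint convexity of $(\rho,\sigma)\mapsto\tr(\rho^\alpha\sigma^{1-\alpha})$ for $\alpha\in[1,2]$ (see~\cite{bathia:ma}). For (c), the representation of $G$ on $\hout$ being irreducible, Schur's lemma makes $\id/d_{out}$ the unique $G$‑invariant state on $\hout$; in particular every $G$‑invariant channel obeys $\Phi(\id/d_{in})=\id/d_{out}$. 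Averaging over $G$ and using operator convexity of $t\mapsto t^{1-\alpha}$ (for $\alpha\in[1,2]$) shows that replacing $\sigma_{out}$, respectively $\Lambda(\sigma_{in})$, by its $G$‑average does not increase the inner maximum on either side of \eqref{eq:chidmod}; hence for an irreducibly covariant channel both sides are attained at the completely mixed state and collapse to $\log d_{out}-S^{\min}_\alpha(\Lambda)$, i.e.\ \eqref{eq:chiexprx}. This applies to $\Phi$ but \emph{not} directly to $\Phi^{\otimes n}$, whose output representation $g^{\otimes n}$ is reducible; instead, feeding the (generally suboptimal) choice $\sigma_{in}=(\id/d_{in})^{\otimes n}$ into the right‑hand side of \eqref{eq:chidmod} and using $\Phi(\id/d_{in})=\id/d_{out}$ gives exactly \eqref{eq:upperboundchialpha}, i.e.\ $\chi^*_\alpha(\Phi^{\otimes n})\le n\log d_{out}-S^{\min}_\alpha(\Phi^{\otimes n})=n\bigl(\log d_{out}-S^{\min}_\alpha(\Phi)\bigr)$, the last equality being the additivity hypothesis (B) (\cite{king:unital,king:depol,datta:werner,alicki:werner,yura:werner}). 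Combining this with the trivial $n\chi^*_\alpha(\Phi)\le\chi^*_\alpha(\Phi^{\otimes n})$ and with \eqref{eq:chiexprx} for $\Phi$ pins down $\bar\chi^*_\alpha(\Phi)=\chi^*_\alpha(\Phi)=\log d_{out}-S^{\min}_\alpha(\Phi)$; letting $\alpha\downarrow 1$ and using continuity of $S^{\min}_\alpha$ in $\alpha$ (together with \eqref{eq:chiexpr}) identifies $\bar\chi^*(\Phi)=\log d_{out}-S^{\min}(\Phi)=C_\Phi$. Finally, given any $R>C_\Phi$, Ogawa and Nagaoka's estimate~\cite{ogawa:converse} supplies an $\alpha>1$ close enough to $1$ that (B) holds and $\chi^*_\alpha(\Phi)<R$; then $\bar\chi^*_\alpha(\Phi)<R$ and \eqref{eq:psuccupperbound} gives the asserted exponential decay of $P^\Phi_{succ}(n,R)$.

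\emph{Main obstacle.} The crux is step (c) for $\Phi^{\otimes n}$: because $g^{\otimes n}$ is reducible one cannot evaluate $\chi^*_\alpha(\Phi^{\otimes n})$ directly by symmetry, so the (non‑trivial but, for the channels at hand, already established) additivity of the \emph{single‑letter} quantity $S^{\min}_\alpha$ must be used precisely to close the gap between the suboptimal upper bound \eqref{eq:upperboundchialpha} and the free subadditive lower bound — this is exactly what allows entangled codes to be handled without a separate additivity statement for $\chi^*$ on entangled inputs. A secondary subtlety is performing the two limits $\alpha\downarrow 1$ and $n\to\infty$ compatibly, for which the uniform‑in‑$\alpha$ control of $\chi^*_\alpha(\Phi)$ near $\alpha=1$ from~\cite{ogawa:converse} is essential, while the operator‑convexity and data‑processing inputs confine $\alpha$ to (roughly) $[1,2]$.
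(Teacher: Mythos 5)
Your proof follows the same three‑step scaffold as the paper (bound $P^\Phi_{succ}$ by $\bar\chi^*_\alpha$; sandwich $\chi^*_\alpha$ between $D_\alpha$‑quantities; collapse the sandwich via covariance and close the gap for $\Phi^{\otimes n}$ using additivity of $S^{\min}_\alpha$), and steps (b)–(c) and the ``main obstacle'' paragraph match the paper's reasoning well. Step (a), however, is a genuinely different route: you derive \eqref{eq:psuccupperbound} by applying the data‑processing inequality for the Petz R\'enyi divergence $D_\alpha$ across the decoding measurement on the cq‑state $\rho_{XB}$, followed by a binary/H\"older bound. The paper instead proves Lemma~\ref{lem:errorexponentbound}(\ref{it:firstpartoperational}) directly and more elementarily, using only operator monotonicity of $y\mapsto y^{1/\alpha}$ to get $\sigma_x\le(\sum_{x'}\sigma_{x'}^\alpha)^{1/\alpha}$ and then $E_x\le\id$; this avoids invoking the (correct for $\alpha\in(1,2]$, but less elementary) monotonicity of $D_\alpha$ under CPTP maps, and also keeps step (a) independent of the sandwich proved in step (b).

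There is one bookkeeping slip in your step (a) worth flagging: you take $\sigma$ to be the minimizer on the \emph{right}‑hand side of \eqref{eq:chidmod} and claim that this ``bounds every $\tr(\omega_x^\alpha\sigma^{1-\alpha})$ by $2^{(\alpha-1)\chi^*_\alpha(\Phi^{\otimes n})}$.'' But the right‑hand side of \eqref{eq:chidmod} is the quantity bounded \emph{below} by $\chi^*_\alpha(\Phi^{\otimes n})$, so what you actually get from that choice is $\tr(\omega_x^\alpha\sigma^{1-\alpha})\le 2^{(\alpha-1)\,\min_{\sigma_{in}}\max_\rho D_\alpha(\Phi^{\otimes n}(\rho)\|\Phi^{\otimes n}(\sigma_{in}))}$, which equals $2^{(\alpha-1)\chi^*_\alpha(\Phi^{\otimes n})}$ only \emph{after} the additivity in step (c) has forced the sandwich to close; as written this creates a forward reference. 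The fix is trivial: pick instead the minimizer of the \emph{left}‑hand side of \eqref{eq:chidmod} (equivalently, the state $\mu_{\alpha,Q}(\rho_{XB})$ of Lemma~\ref{lem:dalphaMinimized}), for which $D_\alpha(\rho_{XB}\|\pi_X\otimes\sigma)=\chi_\alpha(\rho_{XB})\le\chi^*_\alpha(\Phi^{\otimes n})$ holds outright, and then \eqref{eq:psuccupperbound} follows exactly as you intend with no circularity.
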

We assume throughout that the representations of $G$ are continuous, and state our proofs for the case where $G$ is finite (the general case is analogous, see e.g.,~\cite{holevo:covariant} for details).

For rates $R < C_\Phi$, the rate of convergence 
of $P_{succ}^{\Phi}(n,R)\rightarrow 1$ for the optimal code and decoder is measured by the so-called reliability rate function (see e.g.,~\cite{burnashev:reliability})
\begin{align}
E^\Phi(R)=\lim_{n\rightarrow \infty } \sup\frac{-\log (1-P_{succ}^{\Phi}(n,R))}{n}\ . \label{eq:reliabilityfunction}
\end{align}
For $R > C_\Phi$, we are interested in the rate at which $P^{\Phi}_{succ}(n,R)\rightarrow 0$ as $n\rightarrow \infty$. In analogy to~\eqref{eq:reliabilityfunction}, we introduce the function
\begin{align}
E^{\Phi}(R)=\lim_{n\rightarrow \infty } \inf\frac{-\log P_{succ}^{\Phi}(n,R)}{n}\ .\label{eq:errorexponent}
\end{align}

We now make the three main steps $(a)$--$(c)$ in the proof of our theorem more explicit.   The following lemma gives a bound on~\eqref{eq:errorexponent} in 
terms of the regularized $\alpha$-Holevo quantity, and thus connects $\alpha$-Holevo quantities to the operational coding problem.
\begin{lemma}\label{lem:errorexponentbound}
For all CPTPMs $\Phi: \bounded(\hin) \rightarrow \bounded(\hout)$ 
\begin{enumerate}
\item\label{it:firstpartoperational}
The operational quantity~\eqref{eq:errorexponent} is bounded by the regularized $\alpha$-Holevo quantity as
\begin{align*}
E^{\Phi}(R)\geq \left(1-\frac{1}{\alpha}\right)\left(R-\bar{\chi}^*_\alpha(\Phi)\right)\qquad\textrm{ for all }\alpha > 1\ .
\end{align*}
\item\label{it:secondpartoperational}
For every $R>\chi^*(\Phi)$, there exists $\beta=\beta(R) > 1$ such that $R>\chi_\alpha^*(\Phi)$ for all $1<\alpha< \beta$.
\end{enumerate}
In particular, $E^{\Phi}(R)>0$ for all $R>\chi^*(\Phi)$ if $\chi_\alpha^*$ is additive for~$\Phi$ for all $\alpha > 1$ close to $1$.
\end{lemma}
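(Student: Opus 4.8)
The plan is to establish the two parts of Lemma~\ref{lem:errorexponentbound} separately and then deduce the ``in particular'' claim. For part~\ref{it:firstpartoperational} I would fix $\alpha>1$, put $q=\alpha/(\alpha-1)$ and $a=(\alpha-1)/(2\alpha)$ (so that $2aq=1$ and $2a\alpha=\alpha-1$), and take a code $\{\rho_x\}_{x=1}^{2^{nR}}$ with decoder $\{E_x\}$; write $W_x=\Phi^{\otimes n}(\rho_x)$ and fix any full-rank state $\sigma$ on $\hout^{\otimes n}$. The starting point is the identity $\tr(E_xW_x)=\tr\big[(\sigma^{a}E_x\sigma^{a})(\sigma^{-a}W_x\sigma^{-a})\big]$, valid by cyclicity of the trace. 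Summing over $x$ and applying the H\"older inequality for Schatten norms with exponents $q$ and $\alpha$ to the block-diagonal operators $\bigoplus_x\sigma^aE_x\sigma^a$ and $\bigoplus_x\sigma^{-a}W_x\sigma^{-a}$ gives
\begin{align*}
\sum_x\tr(E_xW_x)&\le\Big(\sum_x\tr\big[(\sigma^aE_x\sigma^a)^q\big]\Big)^{1/q}\\
&\quad\times\Big(\sum_x\tr\big[(\sigma^{-a}W_x\sigma^{-a})^\alpha\big]\Big)^{1/\alpha}.
\end{align*}
To the first factor I would apply the Araki--Lieb--Thirring inequality~\cite{bathia:ma} to get $\tr\big[(\sigma^aE_x\sigma^a)^q\big]\le\tr(E_x^q\sigma)$, then use $E_x^q\le E_x$ (valid since $0\le E_x\le\id$ and $q\ge1$) and $\sum_xE_x\le\id$, so this factor is at most $1$; to the second factor I would apply Araki--Lieb--Thirring again to get $\tr\big[(\sigma^{-a}W_x\sigma^{-a})^\alpha\big]\le\tr(W_x^\alpha\sigma^{1-\alpha})=2^{(\alpha-1)D_\alpha(W_x\|\sigma)}$. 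Now I choose $\sigma$ so that $\max_\rho D_\alpha(\Phi^{\otimes n}(\rho)\|\sigma)$ is within $\delta$ of $\min_{\sigma_{out}}\max_\rho D_\alpha(\Phi^{\otimes n}(\rho)\|\sigma_{out})$, which by the first inequality of~\eqref{eq:chidmod} is at most $\chi_\alpha^*(\Phi^{\otimes n})$; the full-rank requirement is handled by the perturbation $\sigma\mapsto(1-\varepsilon)\sigma+\varepsilon\,\id/d$, which changes $D_\alpha$ by at most $-\log(1-\varepsilon)\to0$. Using also $\max_x D_\alpha(W_x\|\sigma)\le\max_\rho D_\alpha(\Phi^{\otimes n}(\rho)\|\sigma)$, I obtain
\begin{align*}
P_{succ}^\Phi(n,R)&=\frac{1}{2^{nR}}\sum_x\tr(E_xW_x)\\
&\le 2^{-(1-1/\alpha)(nR-\chi_\alpha^*(\Phi^{\otimes n}))}.
\end{align*}
Dividing by $n$, taking $\liminf_{n\to\infty}$, and using $\chi_\alpha^*(\Phi^{\otimes n})/n\to\bar{\chi}_\alpha^*(\Phi)$ (the limit exists by superadditivity of $\chi_\alpha^*$, which holds because $\chi_\alpha$ is additive on product ensembles) then yields $E^\Phi(R)\ge(1-1/\alpha)(R-\bar{\chi}_\alpha^*(\Phi))$.

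For part~\ref{it:secondpartoperational}, the point is that $\alpha\mapsto\chi_\alpha^*(\Phi)$ is right-continuous at $\alpha=1$ with limiting value $\chi^*(\Phi)$; this is the statement of Ogawa and Nagaoka~\cite{ogawa:converse}. A self-contained argument runs as follows: $(\alpha,\mathcal E)\mapsto\chi_\alpha(\mathcal E)$ has a removable singularity at $\alpha=1$, where its value is the Holevo quantity $\chi(\mathcal E)$, and is jointly continuous on, say, $[1,2]$ times the compact set of ensembles with at most $d_{in}^2$ pure states; by a Carath\'eodory-type argument $\chi_\alpha^*(\Phi)$ equals the maximum of $\chi_\alpha(\{p_x,\Phi(\rho_x)\}_x)$ over this set for every $\alpha\ge1$, so the maximum theorem gives continuity of $\chi_\alpha^*(\Phi)$ in $\alpha$, whence $\lim_{\alpha\to1^+}\chi_\alpha^*(\Phi)=\chi^*(\Phi)$; consequently, for $R>\chi^*(\Phi)$ there is $\beta>1$ with $\chi_\alpha^*(\Phi)<R$ for all $\alpha\in(1,\beta)$. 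Finally, if $\chi_\alpha^*$ is additive for $\Phi$ then $\bar{\chi}_\alpha^*(\Phi)=\chi_\alpha^*(\Phi)$; so, given $R>\chi^*(\Phi)$, I pick (using part~\ref{it:secondpartoperational}) an $\alpha>1$ close enough to $1$ that simultaneously $\chi_\alpha^*(\Phi)<R$ and $\chi_\alpha^*$ is additive, and then $\bar{\chi}_\alpha^*(\Phi)=\chi_\alpha^*(\Phi)<R$, so part~\ref{it:firstpartoperational} gives $E^\Phi(R)\ge(1-1/\alpha)(R-\bar{\chi}_\alpha^*(\Phi))>0$.

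I expect the main obstacle to be part~\ref{it:firstpartoperational}: carrying out the operator manipulations correctly, in particular the two applications of Araki--Lieb--Thirring with the right exponents and the bookkeeping of the full-rank perturbation of $\sigma$ and its effect on $D_\alpha$. A prerequisite is inequality~\eqref{eq:chidmod}, the R\'enyi analog of the Schumacher--Westmoreland formula~\cite{schumacher:capacityVSrelative}, whose derivation follows~\cite{schumacher:capacityVSrelative} but needs care with the non-commutative (reverse-)H\"older steps and with the supports of the states.
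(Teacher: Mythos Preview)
Your argument is correct, but for part~\ref{it:firstpartoperational} it takes a substantially different route from the paper. The paper's proof is essentially a two-line application of operator monotonicity: since $y\mapsto y^{1/\alpha}$ is operator monotone for $\alpha>1$, each $\sigma_x:=\Phi^{\otimes n}(\rho_x)$ satisfies $\sigma_x=(\sigma_x^\alpha)^{1/\alpha}\le\bigl(\sum_{x'}\sigma_{x'}^\alpha\bigr)^{1/\alpha}$, and summing $\tr(E_x\sigma_x)$ against this with $\sum_xE_x=\id$ gives $P_{succ}^\Phi(n,R)\le 2^{-nR}\tr\bigl[(\sum_x\sigma_x^\alpha)^{1/\alpha}\bigr]=2^{-(1-1/\alpha)(nR-\chi_\alpha(\{2^{-nR},\sigma_x\}))}\le 2^{-(1-1/\alpha)(nR-\chi_\alpha^*(\Phi^{\otimes n}))}$ directly. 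No reference state, no H\"older, no Araki--Lieb--Thirring, and no appeal to~\eqref{eq:chidmod} are needed; in the paper this lemma is therefore logically prior to the relative-entropy characterization rather than dependent on it. Your H\"older/ALT approach with an auxiliary $\sigma$ and the lower bound of~\eqref{eq:chidmod} is longer and needs the full-rank perturbation bookkeeping (which, as you note, washes out after dividing by~$n$), but it has the virtue of generality: the same template---bound the success probability by a divergence radius and then identify that radius with the channel quantity---carries over to other R\'enyi-type divergences (e.g.\ the sandwiched one) for which the paper's operator-monotonicity shortcut is not directly available. For part~\ref{it:secondpartoperational} and the ``in particular'' clause your treatment matches the paper's, which likewise reduces to the continuity result of~\cite{ogawa:converse}.
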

\noindent The proof of this lemma, which is essentially identical to a derivation in~\cite{ogawa:converse}, is given in Appendix~\ref{lem:errorexponentbound}. Note that for the channels of interest, we have $C_\Phi=\chi^*(\Phi)$. Therefore, Lemma~\ref{lem:errorexponentbound} reduces the problem of establishing a strong converse to the additivity of $\bar{\chi}_\alpha^*$.

Recall that the second step is to bound the quantity $\chi_\alpha^*$ in terms of a generalized form of the relative entropy for $\alpha>1$.
\begin{lemma}\label{lem:evaluation}
Let $\Lambda: \bounded(\hin) \rightarrow \bounded(\hout)$ be a CPTPM, and $\alpha > 1$. The quantity $\chi_\alpha^*(\Lambda)$ is related to $D_\alpha$ by
\begin{align}
\min_{\sigma_{\rm out}} \max_{\rho} D_\alpha(\Lambda(\rho)\|\sigma_{\rm out})&\leq \chi_\alpha^*(\Lambda)\nonumber\\
&\leq \min_{\sigma_{\rm in}}\max_\rho D_\alpha(\Lambda(\rho)\|\Lambda(\sigma_{\rm in}))\ .\label{eq:chiD}
\end{align}
Moreover, if  $\Phi: \bounded(\hin) \rightarrow \bounded(\hout)$  is a CPTPM satisfying the covariance 
property~\eqref{eq:covarianceproperty}, then 
\begin{align}
\chi_\alpha^*(\Phi)&=\min_{\sigma_{\rm out}} \max_{\rho} D_\alpha(\Phi(\rho)\|\sigma_{\rm out})\nonumber\\
&=\min_{\sigma_{\rm in}}\max_\rho D_\alpha(\Phi(\rho)\|\Phi(\sigma_{\rm in}))\nonumber\\
&=\log d_{out}-S_\alpha^{\min}(\Phi)\ ,\label{eq:covariantchialpha}
\end{align}
where $d_{out}$ is the dimension of $\hout$.
\end{lemma}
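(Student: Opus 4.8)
The plan is to carry the Schumacher--Westmoreland proof of \eqref{eq:accrelative} over to the $\alpha$-quantities, the one genuinely new ingredient being a ``reverse H\"older'' variational identity that plays the role of Donald's identity. Unwinding the definitions \eqref{eq:dalphadef}, write $M_E := \sum_x p_x \sigma_x^\alpha$ for an ensemble $E=\{p_x,\sigma_x\}_x$ of output states, so that $\chi_\alpha(E) = \frac{\alpha}{\alpha-1}\log\tr\bigl(M_E^{1/\alpha}\bigr)$ and $D_\alpha(\rho\|\sigma) = \frac{1}{\alpha-1}\log\tr(\rho^\alpha\sigma^{1-\alpha})$. The first step is to establish, for any positive operator $M$ and any $\alpha>1$,
\[
 \min_{\sigma\geq 0,\ \tr\sigma=1}\tr\bigl(M\,\sigma^{1-\alpha}\bigr) \;=\; \bigl(\tr M^{1/\alpha}\bigr)^{\alpha},
\]
with the minimum attained at $\sigma\propto M^{1/\alpha}$. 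For commuting operators this is a Lagrange-multiplier computation using convexity of $t\mapsto t^{1-\alpha}$; in general it follows from operator convexity of $X\mapsto X^{1-\alpha}$ for $\alpha\in(1,2]$ --- which is all we need, since the strong converse only asks for $\alpha$ near $1$ --- together with the elementary check that $\sigma = M^{1/\alpha}/\tr M^{1/\alpha}$ realizes the right-hand side.

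The two-sided bound \eqref{eq:chiD} then follows in the Schumacher--Westmoreland pattern. For the upper bound, take any input ensemble $E = \{p_x,\rho_x\}$ and any input state $\sigma_{\rm in}$, and apply the identity above with $M = M_{\Lambda(E)}$ and $\sigma = \Lambda(\sigma_{\rm in})$: since $\sum_x p_x \tr(\Lambda(\rho_x)^\alpha \sigma^{1-\alpha}) = \tr(M\sigma^{1-\alpha}) \geq (\tr M^{1/\alpha})^\alpha$, the maximum over $x$ of $\tr(\Lambda(\rho_x)^\alpha\sigma^{1-\alpha})$ is at least $(\tr M^{1/\alpha})^\alpha$, and taking logarithms (with $\alpha>1$) turns this into $\max_\rho D_\alpha(\Lambda(\rho)\|\Lambda(\sigma_{\rm in})) \geq \chi_\alpha(\Lambda(E))$; a supremum over $E$ and an infimum over $\sigma_{\rm in}$ give $\chi_\alpha^*(\Lambda)\leq \min_{\sigma_{\rm in}}\max_\rho D_\alpha(\Lambda(\rho)\|\Lambda(\sigma_{\rm in}))$. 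For the lower bound, let $E^* = \{p_x^*,\rho_x^*\}$ attain $\chi_\alpha^*(\Lambda)$ (a maximum over a compact set, by a Carath\'eodory bound on the number of ensemble elements) and set $\bar\sigma := M_{\Lambda(E^*)}^{1/\alpha}/\tr M_{\Lambda(E^*)}^{1/\alpha}$. The claim is that $D_\alpha(\Lambda(\rho)\|\bar\sigma)\leq\chi_\alpha^*(\Lambda)$ for every state $\rho$; if some $\rho_0$ violated it, perturb $E^*$ by adding $\rho_0$ with weight $\epsilon$ (rescaling the other weights by $1-\epsilon$) and differentiate at $\epsilon=0$, using $\frac{d}{d\epsilon}\tr(M+\epsilon A)^{1/\alpha} = \frac{1}{\alpha}\tr(M^{1/\alpha-1}A)$. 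A short computation shows $\frac{d}{d\epsilon}\chi_\alpha$ at $\epsilon=0$ is proportional, with positive constant, to $e^{(\alpha-1)(D_\alpha(\Lambda(\rho_0)\|\bar\sigma) - \chi_\alpha^*(\Lambda))} - 1$, which is strictly positive precisely because $\alpha>1$, contradicting the optimality of $E^*$. Hence $\min_{\sigma_{\rm out}}\max_\rho D_\alpha(\Lambda(\rho)\|\sigma_{\rm out}) \leq \max_\rho D_\alpha(\Lambda(\rho)\|\bar\sigma) \leq \chi_\alpha^*(\Lambda)$.

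For the covariant statement \eqref{eq:covariantchialpha}, first observe that applying the covariance property~\eqref{eq:covarianceproperty} to the maximally mixed input together with Schur's lemma for the irreducible output representation gives $\Phi(\id_{in}/d_{in}) = \id_{out}/d_{out}$, and a one-line computation gives $\max_\rho D_\alpha(\Phi(\rho)\|\id_{out}/d_{out}) = \log d_{out} - S^{\min}_\alpha(\Phi)$ (cf.~\eqref{eq:minimumoutputentropy}). Putting $\sigma_{\rm in} = \id_{in}/d_{in}$ into the right-hand side of \eqref{eq:chiD} yields $\chi_\alpha^*(\Phi)\leq \log d_{out} - S^{\min}_\alpha(\Phi)$. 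For the matching lower bound on $\min_{\sigma_{\rm out}}\max_\rho D_\alpha(\Phi(\rho)\|\sigma_{\rm out})$, fix any $\sigma_{\rm out}$ and let $\rho^*$ attain $S^{\min}_\alpha(\Phi)$; running $\rho^*$ through its orbit under the input representation and using unitary invariance of $D_\alpha$, covariance, the identity $\frac{1}{|G|}\sum_g U_g^\dagger \sigma_{\rm out} U_g = \id_{out}/d_{out}$ (Schur again, now for the output representation), and operator convexity of $X\mapsto X^{1-\alpha}$ to bound the average over $g$ of $\tr(\Phi(\rho^*)^\alpha (U_g^\dagger\sigma_{\rm out}U_g)^{1-\alpha})$ from below by $\tr(\Phi(\rho^*)^\alpha(\id_{out}/d_{out})^{1-\alpha})$, one obtains $\max_\rho D_\alpha(\Phi(\rho)\|\sigma_{\rm out}) \geq D_\alpha(\Phi(\rho^*)\|\id_{out}/d_{out}) = \log d_{out} - S^{\min}_\alpha(\Phi)$ for every $\sigma_{\rm out}$. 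Combining with \eqref{eq:chiD} closes the chain
\[
 \log d_{out} - S^{\min}_\alpha(\Phi) \leq \min_{\sigma_{\rm out}}\max_\rho D_\alpha(\Phi(\rho)\|\sigma_{\rm out}) \leq \chi_\alpha^*(\Phi) \leq \min_{\sigma_{\rm in}}\max_\rho D_\alpha(\Phi(\rho)\|\Phi(\sigma_{\rm in})) \leq \log d_{out} - S^{\min}_\alpha(\Phi),
\]
forcing equality throughout, which is \eqref{eq:covariantchialpha}.

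The step I expect to be the main obstacle is the lower bound in \eqref{eq:chiD}: the perturbation argument must be arranged so that the first-order optimality condition is an honest equality, which needs differentiability of $M\mapsto\tr M^{1/\alpha}$ and some care with boundary cases (ensembles with rank-deficient output states, and the convention $D_\alpha=\infty$ when supports fail to nest), and one must verify that every sign comes out right only because $\alpha>1$ --- the same monotonicity of $t\mapsto t^{1/\alpha}$ and $t\mapsto t^{1-\alpha}$ underlying the \emph{reverse} H\"older identity of the first step. A secondary technical point is that operator convexity of $X\mapsto X^{1-\alpha}$ confines the clean argument to $\alpha\in(1,2]$; since the strong converse uses only $\alpha$ in a right-neighbourhood of $1$ this costs nothing, but it should be stated explicitly.
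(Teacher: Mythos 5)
Your overall plan is the same Schumacher--Westmoreland pattern the paper follows, and your ``reverse H\"older'' identity is exactly the content of the paper's Lemma~\ref{lem:dalphaMinimized} together with the positivity statement Lemma~\ref{lem:dalphaIsPositive}. The two technical substitutions you make, however, narrow or weaken the argument in places where the paper has cleaner tools. First, you invoke operator convexity of $X\mapsto X^{1-\alpha}$ both to get positivity of $D_\alpha$ and to push the group average through the power in the covariant case, which genuinely confines you to $\alpha\in(1,2]$, whereas the lemma is stated (and used elsewhere) for all $\alpha>1$. The paper proves positivity of $D_\alpha$ for every $\alpha>0$ by a majorization argument (Schur's theorem on diagonal entries plus Hardy--Littlewood--P\'olya, packaged as Lemma~\ref{lem:majorize}), reducing to the classical $\alpha$-relative entropy; and in the covariant case it averages $g^\dagger\Lambda(\sigma)^{1-\alpha}g$ --- that is, it applies the power \emph{before} the Schur average, so that only the scalar convexity of $t\mapsto t^{1-\alpha}$ on the eigenvalues is needed, again with no upper restriction on $\alpha$. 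Second, for the lower bound in \eqref{eq:chiD} you differentiate $\epsilon\mapsto\chi_\alpha$ at $\epsilon=0$, which (as you correctly flag) forces you to worry about rank-deficient $M$, where $\tr(M+\epsilon A)^{1/\alpha}$ can have an infinite one-sided derivative, and about support conventions in $D_\alpha$. The paper's Lemma~\ref{lem:maximaldistanceprop} avoids this entirely: the perturbation inequality $\chi_\alpha(\rho'_{XQ})\geq(1-\eta)\chi_\alpha(\rho_{XQ})+\eta\, D_\alpha(\rho_0\|\mu_{\alpha,Q}(\rho'_{XQ}))$ is a \emph{finite} inequality at fixed $\eta\in[0,1]$ obtained from concavity of $\log$ alone, and the contradiction is then closed using only continuity of $\eta\mapsto D_\alpha(\rho_0\|\mu_{\alpha,Q}(\rho'_{XQ}))$ as $\eta\to 0$ --- no derivative is taken. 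Your observation that $\Phi(\id_{in}/d_{in})=\id_{out}/d_{out}$ by Schur's lemma and that plugging $\sigma_{\rm in}=\id/d_{in}$ into the upper bound yields $\chi_\alpha^*(\Phi)\leq \log d_{out}-S_\alpha^{\min}(\Phi)$ matches the paper, and the chain of four inequalities you use to close the covariant identity is exactly right.
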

\begin{proof}
The inequalities~\eqref{eq:chiD} follow from a more general statement shown in Lemma~\ref{lem:generalevaluation}. 
We now show that identity~\eqref{eq:covariantchialpha} follows from~\eqref{eq:chiD} and a straightforward application of the covariance property: 
Consider a CPTPM $\Phi:\bounded(\hin)\rightarrow\bounded(\hout)$ with property~\eqref{eq:covarianceproperty}, and let $\Lambda:\bounded(\widetilde{\hin})\rightarrow\bounded(\hout)$ be a unital CPTPM with the same range. 
Fix some states $\rho' \in \states(\hin)$ and $\sigma \in \states(\widetilde{\hin})$ and observe that 
\begin{align*}
 \max_\rho\tr (\Phi(\rho)^{\alpha}\Lambda(\sigma)^{1-\alpha}) &\geq \tr(\Phi(g\rho'g^\dagger)^{\alpha}\Lambda(\sigma)^{1-\alpha})\\
&= \tr(\Phi(\rho')^{\alpha}g^\dagger \Lambda( \sigma)^{1-\alpha}g) 
 \end{align*}
for all $g\in G$. Here we used the
 covariance of $\Phi$, $(gAg^\dagger)^\beta=g A^\beta g^\dagger$
and the cyclicity of the trace in the last identity.
Taking the average over all $g\in G$
gives
 \begin{align*}
 &\max_\rho \tr(\Phi(\rho)^\alpha\Lambda(\sigma)^{1-\alpha})\\
&\phantom{====}\geq
 \tr\left(\Phi(\rho')^\alpha \frac{1}{|G|}\sum_{g\in G} g^\dagger \Lambda(\sigma)^{1-\alpha}g\right)\\
 &\phantom{====} =\tr(\Phi(\rho')^\alpha)\frac{1}{d_{out}}\cdot\sum_i \lambda_i^{1-\alpha}\ ,
  \end{align*}
where $\{\lambda_i\}$ are the (non-zero) eigenvalues of the operator $\Lambda(\sigma)$. Note that for $\alpha > 1$ the expression $f(\lambda) \assign \sum_i \lambda_i^{1-\alpha}$ is minimal if $\lambda_i=1/d_{out}$ for all $i=1,\ldots,d_{out}$: this follows because $f(\lambda_1,\lambda_2,\ldots)\geq f(\frac{\lambda_1+\lambda_2}{2},\frac{\lambda_1+\lambda_2}{2},\ldots)$ by the convexity of the function $x\mapsto x^{1-\alpha}$, and the symmetry of $f$ with respect to permutations of its arguments. 
This minimum is attained if $\Lambda(\sigma)$ is completely mixed, or (since $\Lambda$ is unital) by choosing $\sigma=\id/\tilde{d}_{in}$  to be the fully mixed state on $\widetilde{\hin}$.
We conclude that for all~$\rho'$ and~$\sigma$ 
  \begin{align*}
 \max_\rho \tr(\Phi(\rho)^{\alpha}\Lambda(\sigma)^{1-\alpha}) \geq \tr\left(\Phi(\rho')^{\alpha}\Lambda(\id/\tilde{d}_{in})^{1-\alpha}\right)\ .
 \end{align*}
 Hence, taking the maximum over $\rho'$ and the minimum over $\sigma$ gives
 \begin{align*}
 \min_\sigma\max_\rho D_\alpha\left(\Phi(\rho)\|\Lambda(\sigma)\right)=\max_\rho D_\alpha\left(\Phi(\rho)\|\Lambda\left(\textfrac{\id}{\tilde{d}_{in}}\right)\right)\ .
 \end{align*}
  Applying this equation to the cases~$\Lambda=\Phi$ (any covariant channel is unital),
 and $\Lambda=\idchannel$ equal to the identity channel on $\hout$ immediately shows that the two quantities in~\eqref{eq:covariantchialpha} are indeed equal, and given by
 \begin{align*}
& \max_\rho D_\alpha\left(\Phi(\rho)\|\textfrac{\id}{d_{out}}\right) \\
&\phantom{====} =\max_\rho \frac{1}{\alpha-1}\log d_{out}^{\alpha-1}\tr(\Phi(\rho)^\alpha)\\
 &\phantom{====}=\log d_{out}-S_{\alpha}^{\min}(\Phi)\ ,
 \end{align*} 
 as claimed. 
 \end{proof}

The last step in the proof of Theorem~\ref{thm:maintheorem} is to combine  
Lemma~\ref{lem:errorexponentbound} with the following statement derived in the main text.
\begin{theorem}[Additivity of $\chi_\alpha^*$]
Let $\Phi: \bounded(\hin) \rightarrow \bounded(\hout)$ be a CPTPM with properties~\eqref{eq:covarianceproperty} and~\eqref{eq:minimumoutputproperty} as in Theorem~\ref{thm:maintheorem}. Then for all $\alpha > 1$
\begin{align*}
\chi_\alpha^*(\Phi^{\otimes n}) &=n\cdot \chi^*_\alpha(\Phi)=n(\log d_{out}-S^{\min}_\alpha(\Phi))\ ,
\end{align*}
where $d_{out}$ is the dimension of $\hout$.
\end{theorem}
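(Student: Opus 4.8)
The plan is to derive the chain of equalities by sandwiching $\chi_\alpha^*(\Phi^{\otimes n})$ between a lower bound and an upper bound that coincide. The lower bound is the elementary ``subadditivity'' of $\chi_\alpha^*$ already invoked in the main text, $\chi_\alpha^*(\Phi^{\otimes n})\geq n\,\chi_\alpha^*(\Phi)$, obtained by restricting the maximization in the definition of $\chi_\alpha^*(\Phi^{\otimes n})$ to product ensembles, for which $\chi_\alpha$ is additive across the tensor factors; this holds for every $\alpha>1$. The upper bound will come from the right-hand inequality of~\eqref{eq:chiD} in Lemma~\ref{lem:evaluation} evaluated at the fully mixed input, and it is here that hypothesis~\eqref{eq:minimumoutputproperty} enters; accordingly the conclusion is to be read for those $\alpha>1$ for which~\eqref{eq:minimumoutputproperty} is available, i.e.\ $\alpha$ close to $1$.

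Concretely, for the upper bound I would apply $\chi_\alpha^*(\Lambda)\leq\min_{\sigma_{\rm in}}\max_\rho D_\alpha(\Lambda(\rho)\|\Lambda(\sigma_{\rm in}))$ from Lemma~\ref{lem:evaluation} --- valid for \emph{any} CPTPM --- to $\Lambda=\Phi^{\otimes n}$, and plug in the particular choice $\sigma_{\rm in}=(\id/d_{in})^{\otimes n}$. Since a covariant channel is unital, $\Phi^{\otimes n}$ is unital too, so $\Phi^{\otimes n}\big((\id/d_{in})^{\otimes n}\big)=\id/d_{out}^{\,n}$, the fully mixed state on $\hout^{\otimes n}$. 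Against a maximally mixed reference state the $\alpha$-divergence collapses to an $\alpha$-entropy, $D_\alpha(\tau\|\id/D)=\tfrac{1}{\alpha-1}\log\big(D^{\alpha-1}\tr\tau^\alpha\big)=\log D-S_\alpha(\tau)$, exactly as in the closing computation of the proof of Lemma~\ref{lem:evaluation}; maximizing over $\rho$ then yields $\chi_\alpha^*(\Phi^{\otimes n})\leq n\log d_{out}-S_\alpha^{\min}(\Phi^{\otimes n})$. Now invoke~\eqref{eq:minimumoutputproperty} to replace $S_\alpha^{\min}(\Phi^{\otimes n})$ by $n\,S_\alpha^{\min}(\Phi)$, and then identity~\eqref{eq:covariantchialpha} of Lemma~\ref{lem:evaluation} (applicable because $\Phi$ itself is covariant with irreducible output representation) to recognize $\log d_{out}-S_\alpha^{\min}(\Phi)=\chi_\alpha^*(\Phi)$. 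This gives $\chi_\alpha^*(\Phi^{\otimes n})\leq n\,\chi_\alpha^*(\Phi)$; combined with the lower bound, every inequality becomes an equality, which is the assertion, and simultaneously re-establishes $\chi_\alpha^*(\Phi)=\log d_{out}-S_\alpha^{\min}(\Phi)$ at the $n$-fold level.

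Given Lemma~\ref{lem:evaluation}, the argument is essentially bookkeeping: the substantive input is hypothesis~\eqref{eq:minimumoutputproperty}, additivity of the minimum output $\alpha$-entropy, which for the channels of interest is borrowed from~\cite{king:unital} (unital qubit channels), \cite{king:depol} (depolarizing channel) and~\cite{datta:werner,alicki:werner,yura:werner} (Werner--Holevo channel) --- that is the real obstacle, and it is exactly the one already faced on the achievability side. The one structural point worth flagging within the present proof is why the upper bound must be routed through the \emph{general} bound of~\eqref{eq:chiD} applied to $\Phi^{\otimes n}$ rather than the sharper covariant identity~\eqref{eq:covariantchialpha} at level $n$: although $\Phi^{\otimes n}$ is covariant under the diagonal action of $G$, its output representation is the reducible $n$-fold tensor power, so Schur's lemma (used in the proof of Lemma~\ref{lem:evaluation}) no longer pins the $G$-average of $\Phi^{\otimes n}(\sigma)^{1-\alpha}$ to a multiple of the identity, and~\eqref{eq:covariantchialpha} cannot be applied directly. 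The remaining facts --- that a tensor power of the fully mixed state is again fully mixed, and that unitality descends to $\Phi^{\otimes n}$ --- are immediate.
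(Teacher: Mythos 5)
Your proposal is correct and follows essentially the same route as the paper: subadditivity for the lower bound $n\chi_\alpha^*(\Phi)\le\chi_\alpha^*(\Phi^{\otimes n})$, the general upper bound of~\eqref{eq:chiD} applied to $\Phi^{\otimes n}$ at the fully mixed input together with additivity of $S_\alpha^{\min}$ for the matching upper bound, and~\eqref{eq:covariantchialpha} to identify $\chi_\alpha^*(\Phi)=\log d_{out}-S_\alpha^{\min}(\Phi)$. Your observations that the $n$-fold output representation is reducible (so~\eqref{eq:covariantchialpha} cannot be invoked directly at level $n$) and that the conclusion holds only for those $\alpha$ where hypothesis~\eqref{eq:minimumoutputproperty} is available are both correct and worth stating explicitly.
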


In the remainder of this appendix, we fill in the remaining technical details. In particular, we derive~\eqref{eq:chiD} of Lemma~\ref{lem:evaluation}, as well as Lemma~\ref{lem:errorexponentbound}. 

\section{$\alpha$-R{\'e}nyi quantities}\label{sec:dalphaProperties}
\setcounter{section}{2}
We begin with a few properties of $\alpha$-relative entropies.

\subsection{Properties of the $\alpha$-relative entropy}
If $\rho$ and $\sigma$ are classical (i.e., commuting), $D_\alpha(\rho\|\sigma)$ reduces to the classical $\alpha$-relative entropy defined 
in~\cite{csiszar:relative}. The quantity $D_\alpha(\rho\|\sigma)$ for $0\leq \alpha\leq 1$ was previously used, e.g., in~\cite{datta:relative,petz:alpha}.
Some of the following statements also hold for this regime, however, we concentrate on $\alpha>1$. 
We begin by showing positivity of $D_\alpha(\rho\|\sigma)$.

\begin{lemma}\label{lem:dalphaIsPositive}
$D_\alpha(\rho||\sigma) \geq 0$ for all states~$\rho, \sigma \in \states(\mathbb{C}^d)$ and $\alpha > 0$, where equality holds if
and only if $\rho = \sigma$.
\end{lemma}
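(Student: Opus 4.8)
The plan is to reduce the claim to a single scalar inequality in each regime: $\tr(\rho^\alpha\sigma^{1-\alpha})\leq 1$ when $0<\alpha<1$, and $\tr(\rho^\alpha\sigma^{1-\alpha})\geq 1$ when $\alpha>1$. Since the prefactor $\frac{1}{\alpha-1}$ in the definition of $D_\alpha$ is negative in the first case and positive in the second, either inequality yields $D_\alpha(\rho\|\sigma)\geq 0$. First I would dispose of the degenerate case: if $\alpha>1$ and $\supp\sigma\not\supseteq\supp\rho$ then $D_\alpha=\infty>0$ by the stated convention and $\rho\neq\sigma$, so there is nothing to prove; otherwise, restricting to $\supp\sigma$ if necessary, all operators written below are well defined.

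For $0<\alpha<1$ I would apply the Hölder inequality for the trace, $|\tr(XY)|\leq\|X\|_p\|Y\|_q$ with conjugate exponents $p=1/\alpha$ and $q=1/(1-\alpha)$ (both larger than $1$, with $1/p+1/q=1$), to the positive operators $X=\rho^\alpha$ and $Y=\sigma^{1-\alpha}$. This gives $\tr(\rho^\alpha\sigma^{1-\alpha})\leq\|\rho^\alpha\|_{1/\alpha}\,\|\sigma^{1-\alpha}\|_{1/(1-\alpha)}=(\tr\rho)^{\alpha}(\tr\sigma)^{1-\alpha}=1$, hence $D_\alpha\geq 0$. For $\alpha>1$ the argument runs in the opposite direction. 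Writing $\tau:=\sigma^{(1-\alpha)/(2\alpha)}\,\rho\,\sigma^{(1-\alpha)/(2\alpha)}\geq 0$, cyclicity of the trace gives $\tr\big(\tau\,\sigma^{(\alpha-1)/\alpha}\big)=\tr\rho=1$; applying Hölder with exponents $\alpha$ and $\alpha/(\alpha-1)$ then yields $\tr(\tau^{\alpha})\geq 1$. Finally, the Araki--Lieb--Thirring inequality $\tr\big[(A^{1/2}BA^{1/2})^{\alpha}\big]\leq\tr(A^{\alpha}B^{\alpha})$, valid for positive $A,B$ and $\alpha\geq 1$, applied with $A=\sigma^{(1-\alpha)/\alpha}$ and $B=\rho$, gives $\tr(\tau^{\alpha})\leq\tr(\rho^{\alpha}\sigma^{1-\alpha})$, so $\tr(\rho^{\alpha}\sigma^{1-\alpha})\geq 1$ and again $D_\alpha\geq 0$.

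For the equality statement I would trace back through the inequalities used. Equality in Hölder's inequality for $X=\rho^\alpha$, $Y=\sigma^{1-\alpha}$ forces $\rho^\alpha$ and $\sigma^{1-\alpha}$ to be proportional, hence $\rho\propto\sigma$, and since both are states, $\rho=\sigma$; conversely $\rho=\sigma$ gives $\tr(\rho^\alpha\rho^{1-\alpha})=\tr\rho=1$ and $D_\alpha=0$. In the regime $\alpha>1$ equality in the chain above additionally requires equality in Araki--Lieb--Thirring, which forces $A$ and $B$, and therefore $\rho$ and $\sigma$, to commute; on the common eigenbasis the relevant trace reduces to the classical quantity $\sum_i q_i(p_i/q_i)^{\alpha}$, and strict convexity of $x\mapsto x^{\alpha}$ in Jensen's inequality pins down $p_i=q_i$, i.e.\ $\rho=\sigma$.

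A cleaner route in the range of $\alpha$ actually used in the paper ($\alpha$ close to $1$) is to pinch $\rho$ in the eigenbasis of $\sigma$: if $\mathcal P$ denotes that pinching, then $\mathcal P(\sigma)=\sigma$ gives $\tr(\rho^\alpha\sigma^{1-\alpha})=\tr\big(\mathcal P(\rho^\alpha)\,\sigma^{1-\alpha}\big)$, and operator convexity (for $1\leq\alpha\leq 2$) or operator concavity (for $0<\alpha<1$) of $t\mapsto t^{\alpha}$ together with the operator Jensen inequality relates $\mathcal P(\rho^\alpha)$ to $\mathcal P(\rho)^{\alpha}$, reducing the estimate to the classical Jensen inequality $\sum_i q_i(p_i/q_i)^{\alpha}\geq 1$ (the inequality reversing for $\alpha<1$). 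I expect the only genuine subtleties to be bookkeeping with the support convention in the non-full-rank case and, in the $\alpha>1$ regime, checking that equality really does propagate back through Araki--Lieb--Thirring to commutativity, since there the bound is not obtained by a single application of Hölder.
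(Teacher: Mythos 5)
Your proof is correct, but it takes a genuinely different route from the paper. The paper reduces the quantum statement to the classical one via a majorization argument: Lemma~\ref{lem:majorize} (a consequence of Schur's theorem and the Hardy--Littlewood--P\'olya theorem) shows that $D_\alpha(\rho\|\sigma)$ is bounded below by a classical relative $\alpha$-entropy $D_\alpha(\lambda^\pi\|\mu)$ of the permuted spectra, which is then handled by citing the classical non-negativity result of Csisz\'ar~\cite{csiszar:relative}; the equality case is treated self-containedly by applying Birkhoff's theorem to the doubly stochastic matrix $|U_{ij}|^2$ relating the eigenbases, again invoking classical positivity, and concluding that $U$ must be block-diagonal so that $\sigma = U\rho U^\dagger = \rho$. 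This argument is uniform in $\alpha$ and requires only elementary matrix theory on top of the classical fact. You instead argue directly from trace inequalities, splitting on the sign of $1/(\alpha-1)$: for $0<\alpha<1$ a single Schatten--H\"older application suffices, while for $\alpha>1$ you need H\"older together with Araki--Lieb--Thirring. Both chains are sound (and the Lieb--Thirring step correctly gives $\tr\tau^\alpha\le\tr(\rho^\alpha\sigma^{1-\alpha})$), but they import heavier machinery, and the subtlety you flag yourself is real: to close the equality case for $\alpha>1$ you need to know that saturating Araki--Lieb--Thirring with $\alpha>1$ forces $[A,B]=0$, a nontrivial fact in its own right, whereas the paper obtains commutativity by the more elementary Birkhoff detour, at the modest cost of the auxiliary Lemma~\ref{lem:majorize}. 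Also note that even in the $\alpha<1$ case, equality in Schatten--H\"older for positive operators forces not just a scalar proportionality of eigenvalue lists but also that the two operators be simultaneously diagonal, so a full argument has to extract the commutativity there too; be explicit about that step. Finally, your pinching alternative is the cleanest of all but, as you observe, operator convexity of $t\mapsto t^\alpha$ restricts it to $1\le\alpha\le2$ (and $0<\alpha\le 1$ by concavity), which covers the paper's intended use near $\alpha=1$ but not the lemma as stated for all $\alpha>0$.
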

\begin{proof}
Let $\lambda=(\lambda_1,\ldots,\lambda_d)$ and $\mu=(\mu_1,\ldots,\mu_d)$ 
denote the eigenvalues of $\rho$ and $\sigma$ (in some fixed order), respectively, and let $\lambda^\pi=(\lambda_{\pi(1)},\ldots,\lambda_{\pi(d)})$ be the reordered list, for every permutation $\pi\in S_d$.
Lemma~\ref{lem:majorize} implies that 
\begin{align}
\min_{\pi\in S_d}D_\alpha(\lambda^\pi\|\mu)\leq D_\alpha(\rho\|\sigma)\ .\label{eq:majorizationperm}
\end{align}
Inequality~\eqref{eq:majorizationperm} and the fact that
the classical $\alpha$-relative entropy
is non-negative~\cite{csiszar:relative} immediately imply that $D_\alpha(\rho\|\sigma)\geq 0$ for all $\rho$ and $\sigma$.

Note that the classical $\alpha$-relative entropy $D_\alpha(P\|Q)$ of two distributions $P$ and $Q$ vanishes only if $P\equiv Q$~\cite{csiszar:relative}. Combining this with~\eqref{eq:majorizationperm}, we conclude that if $D_\alpha(\rho\|\sigma)=0$, then $\rho$ and $\sigma$ must have the same spectrum $\lambda$ (up to some permutation). That is, there is a unitary $U$ such that $\sigma=U\rho U^\dagger$.  In particular, we get
\begin{align}
1&=2^{(\alpha-1)D_\alpha(\rho\|\sigma)}\nonumber\\
&=\sum_{i,j}\lambda_i^\alpha\lambda_j^{1-\alpha}|U_{ij}|^2\nonumber\\
&=\Lambda\star\Omega\ ,\label{eq:lambdastareq}
\end{align}
where $\Lambda$ and $\Omega$ are the matrices 
defined by $\Lambda_{ij}=\lambda_i^\alpha\lambda_j^{1-\alpha}$ and $\Omega_{ij}=|U_{ij}|^2$, and where we set
\begin{align*}
A \star B:=\sum_{ij} A_{ij} B_{ij}\ .
\end{align*}
Because $\Omega$ is a doubly stochastic matrix, it is a convex combination
\begin{align}
\Omega &=\sum_{\pi \in S_d} P(\pi) \pi\label{eq:birkhoffapplied}
\end{align}
of permutation matrices acting on $\mathbb{C}^d$, by Birkhoff's theorem (see e.g.,~\cite[Theorem 8.7.1]{horn&johnson:ma}).  From~\eqref{eq:lambdastareq} and~\eqref{eq:birkhoffapplied}, we get by linearity and the definition of $\Lambda$ 
\begin{align*}
1 &=\sum_{\pi\in S_d} P(\pi) \Lambda\star \pi\\
&=\sum_{\pi\in S_d} P(\pi) 2^{(\alpha-1)D_\alpha(\lambda\|\lambda^\pi)}
\end{align*}
With the positivity of the classical relative R\'enyi entropy, we conclude that
\begin{align*}
D_\alpha(\lambda\|\lambda^\pi)=0\ ,
\end{align*}
for every $\pi$ in the support of the distribution $P$. This in turn implies that for any such $\pi$, we have $\lambda=\lambda^\pi$.
In other words, only permutations $\pi$ which permute 
indices corresponding to a fixed eigenvalue among themselves appear in~\eqref{eq:birkhoffapplied}. We conclude that $\Omega$, and in particular $U$~are block-diagonal, with the different blocks corresponding to different eigenvalues, that is, we have
\begin{align*}
U=\bigoplus_{\lambda} U_{\lambda}\qquad \rho=\bigoplus_{\lambda} \lambda \id_{\mathbb{C}^{m_\lambda}}\ ,
\end{align*}
where the direct sums are over all distinct eigenvalues of $\rho$ and $m_\lambda$ is the multiplicity of $\lambda$. This shows that $\sigma=U\rho U^\dagger=\rho$ if $D_\alpha(\rho\|\sigma)=0$, as claimed.
\end{proof}

Next we consider the relative entropy $D_\alpha(\rho\|\sigma)$ for states $\rho,\sigma$ defined by ensembles: For an ensemble $\{p_x,\rho_x\}_{x\in\cX}$,
we introduce a corresponding  classical-quantum state (a \emph{cq-state})
\begin{equation}\label{eq:cqstate}
\rho_{XQ} = \sum_{x=1}^{|\cX|} p_x \underbrace{\outp{x}{x}}_X \otimes \underbrace{\rho_x}_Q \in \states(\hil_X \otimes \hil_Q)
\end{equation}
where $\{\ket{x}\}_{x=1}^{|\setX|}$ is an orthonormal basis of $\hil_X\cong\mathbb{C}^{|\cX|}$. Note that this defines a one-to-one correspondence between ensembles and cq-states. The following lemma shows how the relative entropy of a cq-state $\rho_{XQ}$  and a product state $\rho_{X}\otimes\sigma_Q$ decomposes into a sum of two terms. Only one of the terms depends on $\sigma_Q$. It is given by the relative entropy of $\sigma_Q$ and some state $\mu_{Q}=\mu_{\alpha,Q}(\rho_{XQ})$ which is defined in terms of the ensemble.

\begin{lemma}\label{lem:dalphaMinimized}
For every cq-state $\rho_{XQ} \in \states(\hil_X \otimes \hil_Q)$, define the state
$\mu_{Q}=\mr{\rho_{XQ}}\in\cS(\cH_Q)$ by
\begin{align*}
\mr{\rho_{XQ}} \assign 
\frac{1}{\tr(\zeta_Q)}\cdot\zeta_Q\textrm{ with }\zeta_Q=\left(\sum_x p_x \rho_x^\alpha\right)^{\textfrac{1}{\alpha}}\ .
\end{align*}
Then 
\begin{align*}
D_\alpha(\rho_{XQ}\|\rho_X\otimes\sigma_Q)=D_\alpha(\rho_{XQ}\|\rho_X\otimes \mu_Q)+D_\alpha(\mu_Q\|\sigma_Q)
\end{align*}
for all states $\sigma_Q \in \states(\hil_Q)$.
\end{lemma}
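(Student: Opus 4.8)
The plan is to reduce the identity to a direct computation with the unnormalized quantity $\tr\big((\rho_{XQ})^\alpha(\rho_X\otimes\sigma_Q)^{1-\alpha}\big)$ appearing in definition~\eqref{eq:dalphadef}, exploiting the block structure of the cq-state~\eqref{eq:cqstate}. First I would note that $\rho_{XQ}=\sum_x p_x\proj{x}\otimes\rho_x$ is block-diagonal with respect to $\{\ket{x}\}$, so functional calculus gives $(\rho_{XQ})^\alpha=\sum_x p_x^\alpha\proj{x}\otimes\rho_x^\alpha$ (the blocks being mutually orthogonal); similarly $\rho_X\otimes\sigma_Q$ is block-diagonal with $(\rho_X\otimes\sigma_Q)^{1-\alpha}=\sum_x p_x^{1-\alpha}\proj{x}\otimes\sigma_Q^{1-\alpha}$, where $\sigma_Q^{1-\alpha}$ is the inverse power taken on $\supp(\sigma_Q)$ in accordance with the convention following~\eqref{eq:dalphadef} and we may assume w.l.o.g.\ that $p_x>0$ for all $x$. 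Multiplying these two operators block by block, using $\proj{x}\proj{x'}=\delta_{xx'}\proj{x}$ and $p_x^\alpha p_x^{1-\alpha}=p_x$, the off-diagonal terms drop and taking the trace leaves $\tr\big((\rho_{XQ})^\alpha(\rho_X\otimes\sigma_Q)^{1-\alpha}\big)=\sum_x p_x\tr(\rho_x^\alpha\sigma_Q^{1-\alpha})=\tr\big(\zeta_Q^\alpha\,\sigma_Q^{1-\alpha}\big)$, since $\zeta_Q^\alpha=\sum_x p_x\rho_x^\alpha$ by definition of $\zeta_Q$.

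Next I would factor out the normalization. Since $\mu_Q=\zeta_Q/\tr(\zeta_Q)$ we have $\zeta_Q^\alpha=\tr(\zeta_Q)^\alpha\,\mu_Q^\alpha$, hence $\tr\big((\rho_{XQ})^\alpha(\rho_X\otimes\sigma_Q)^{1-\alpha}\big)=\tr(\zeta_Q)^\alpha\,\tr(\mu_Q^\alpha\sigma_Q^{1-\alpha})$. Applying $\tfrac{1}{\alpha-1}\log(\cdot)$ and recognizing the second factor as $2^{(\alpha-1)D_\alpha(\mu_Q\|\sigma_Q)}$ yields $D_\alpha(\rho_{XQ}\|\rho_X\otimes\sigma_Q)=\tfrac{\alpha}{\alpha-1}\log\tr(\zeta_Q)+D_\alpha(\mu_Q\|\sigma_Q)$, valid for every $\sigma_Q$. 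Finally, specializing to $\sigma_Q=\mu_Q$ kills the last term — by Lemma~\ref{lem:dalphaIsPositive}, or immediately from $\tr(\mu_Q^\alpha\mu_Q^{1-\alpha})=\tr\mu_Q=1$ — which identifies the $\sigma_Q$-independent constant as $\tfrac{\alpha}{\alpha-1}\log\tr(\zeta_Q)=D_\alpha(\rho_{XQ}\|\rho_X\otimes\mu_Q)$; substituting this back gives the claimed decomposition.

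There is no substantive obstacle here beyond bookkeeping; the one point I would treat with care is the domain convention for $D_\alpha$ when $\alpha>1$. One checks that $\supp(\mu_Q)=\bigvee_x\supp(\rho_x)$, so that $\supp(\rho_{XQ})\subseteq\supp(\rho_X\otimes\mu_Q)$ holds automatically and $D_\alpha(\rho_{XQ}\|\rho_X\otimes\mu_Q)$ is always finite, legitimizing the manipulations above; and $D_\alpha(\rho_{XQ}\|\rho_X\otimes\sigma_Q)$ is finite precisely when $\supp(\mu_Q)\subseteq\supp(\sigma_Q)$, i.e.\ precisely when $D_\alpha(\mu_Q\|\sigma_Q)$ is finite, so that the asserted equality also holds, trivially, in the remaining case where both sides equal $+\infty$.
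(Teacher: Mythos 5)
Your proof is correct and follows essentially the same route as the paper's: compute the unnormalized trace $\tr\bigl((\rho_{XQ})^\alpha(\rho_X\otimes\sigma_Q)^{1-\alpha}\bigr)=\tr(\zeta_Q^\alpha\sigma_Q^{1-\alpha})$ via the block structure of the cq-state, factor out $\tr(\zeta_Q)^\alpha$, and recognize the $\sigma_Q$-independent constant as $D_\alpha(\rho_{XQ}\|\rho_X\otimes\mu_Q)=\tfrac{\alpha}{\alpha-1}\log\tr(\zeta_Q)$. The paper states the intermediate identities more tersely and does not discuss support conditions; your added care there (observing $\supp(\mu_Q)=\bigvee_x\supp(\rho_x)$, hence that $D_\alpha(\rho_{XQ}\|\rho_X\otimes\mu_Q)$ is always finite, and that both sides are $+\infty$ exactly when $\supp(\mu_Q)\not\subseteq\supp(\sigma_Q)$) is a welcome tightening, not a different argument.
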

\begin{proof}
Observe that
\begin{align*}
D_{\alpha}(\rho_{XQ}\|\rho_X\otimes\sigma_Q)&=\frac{1}{\alpha-1} \log \tr\left(\sum_x p_x\rho_x^\alpha\sigma_Q^{1-\alpha}\right)\\
D_{\alpha}(\rho_{XQ}\|\rho_X\otimes\mu_Q)&=\frac{\alpha}{\alpha-1}\log\tr(\zeta_Q)\ .\nonumber
\end{align*}
In particular, we get
\begin{align*}
2^{D_\alpha(\rho_{XQ}\|\rho_X\otimes\sigma_Q)}&=\tr\left(\zeta_Q^{\alpha}\sigma_Q^{1-\alpha}\right)^{\frac{1}{\alpha-1}}\\
&=\tr(\zeta_Q)^{\frac{\alpha}{\alpha-1}}\tr\left(\mu_Q^\alpha\sigma_Q^{1-\alpha}\right)^{\frac{1}{\alpha-1}}\ ,
\end{align*}
from which the claim follows immediately.
\end{proof}

\subsection{Relating $D_\alpha$ to $\chi_\alpha$}
We are interested in  $\alpha$-Holevo quantities associated with ensembles $\{p_x,\rho_x\}_x$. Again, it is convenient to consider the corresponding cq-states~\eqref{eq:cqstate}. We define an $\alpha$-Holevo quantity of a cq-state as the corresponding quantity of the associated ensemble, that is,
\begin{align*}
\chi_\alpha(\rho_{XQ})& :=\chi_\alpha(\{p_x,\rho_x\}_x)\ ,
\end{align*}
and we will use ensembles and cq-states interchangeably.

We now essentially follow the arguments that Schumacher and Westmoreland~\cite{schumacher:capacityVSrelative} use to relate the Holevo quantity $\chi$ to the relative entropy $D$. Our goal is to obtain a similar characterization of $\chi_\alpha^*$ in terms of $D_\alpha$, as expressed by Lemma~\ref{lem:generalevaluation} below.
 As a first step, we express the quantity $\chi_\alpha(\rho_{XQ})$  by an optimization over relative $\alpha$-entropies of cq-states.
\begin{lemma}\label{lem:chialphadalpha}
For any cq-state $\rho_{XQ} \in \states(\hil_X \otimes \hil_Q)$,  we have the identity
\begin{align}
\chi_\alpha(\rho_{XQ})=\min_{\sigma_Q} D_\alpha (\rho_{XQ}\|\rho_X\otimes\sigma_Q)\label{eq:ten}
\end{align}
where the minimum is attained for the state $\sigma_Q = \mr{\rho_{XQ}}$ defined in Lemma~\ref{lem:dalphaMinimized}.
Furthermore, we have for all $\sigma_Q \in \states(\hil_Q)$
\begin{align}
D_\alpha(\rho_{XQ}\|\rho_X\otimes\sigma_Q)\geq \chi_\alpha(\rho_{XQ})\ \label{eq:nine}
\end{align}
with equality if and only if $\sigma_Q=\mr{\rho_{XQ}}$.
\end{lemma}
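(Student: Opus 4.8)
\noindent\emph{Proof idea.} The plan is to obtain both \eqref{eq:ten} and \eqref{eq:nine} as a short consequence of the two preceding results: the additive decomposition of Lemma~\ref{lem:dalphaMinimized} and the positivity of the $\alpha$-relative entropy of Lemma~\ref{lem:dalphaIsPositive}. I would carry this out in three steps. First, I would identify $D_\alpha(\rho_{XQ}\|\rho_X\otimes\mu_Q)$, for $\mu_Q=\mr{\rho_{XQ}}$, with $\chi_\alpha(\rho_{XQ})$. This is already contained in the proof of Lemma~\ref{lem:dalphaMinimized}: using the block-diagonal structure of the cq-state, $\rho_{XQ}^\alpha=\sum_x p_x^\alpha\proj{x}\otimes\rho_x^\alpha$ and $(\rho_X\otimes\mu_Q)^{1-\alpha}=\sum_{x\,:\,p_x>0}p_x^{1-\alpha}\proj{x}\otimes\mu_Q^{1-\alpha}$, together with $\zeta_Q^\alpha=\sum_x p_x\rho_x^\alpha$, one gets $\tr(\rho_{XQ}^\alpha(\rho_X\otimes\mu_Q)^{1-\alpha})=(\tr\zeta_Q)^\alpha$, hence $D_\alpha(\rho_{XQ}\|\rho_X\otimes\mu_Q)=\frac{\alpha}{\alpha-1}\log\tr\zeta_Q$, which by the definition of $\chi_\alpha$ is precisely $\chi_\alpha(\rho_{XQ})$. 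In particular this value is finite, so no support pathology occurs at $\sigma_Q=\mu_Q$.

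Second, I would substitute this into the identity of Lemma~\ref{lem:dalphaMinimized}, which then reads $D_\alpha(\rho_{XQ}\|\rho_X\otimes\sigma_Q)=\chi_\alpha(\rho_{XQ})+D_\alpha(\mu_Q\|\sigma_Q)$ for every state $\sigma_Q$ (with both sides being $+\infty$ simultaneously in the degenerate case where the support condition in~\eqref{eq:dalphadef} fails). Third, I would invoke Lemma~\ref{lem:dalphaIsPositive}: since $\alpha>1>0$ and $\mu_Q,\sigma_Q$ are density operators, $D_\alpha(\mu_Q\|\sigma_Q)\geq 0$, with equality if and only if $\sigma_Q=\mu_Q$. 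This is exactly~\eqref{eq:nine} together with its equality clause, and minimizing over $\sigma_Q$ then yields~\eqref{eq:ten} with the stated minimizer $\sigma_Q=\mr{\rho_{XQ}}$.

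I do not anticipate a genuine obstacle here: given Lemmas~\ref{lem:dalphaMinimized} and~\ref{lem:dalphaIsPositive}, this lemma is essentially a corollary, and indeed that is the point of stating those two lemmas first. The only items that require a line of care are (i) that $\mu_Q=\mr{\rho_{XQ}}$ is genuinely a state (positivity and unit trace being immediate from its definition as $\zeta_Q/\tr\zeta_Q$ with $\zeta_Q\geq 0$), so that Lemma~\ref{lem:dalphaIsPositive} applies; and (ii) the bookkeeping around the conventions for $D_\alpha$ in~\eqref{eq:dalphadef}, namely checking that $D_\alpha(\rho_{XQ}\|\rho_X\otimes\mu_Q)$ is finite (verified by the explicit evaluation in the first step, since $\supp(\rho_x)\subseteq\supp(\zeta_Q)$ for every $x$) and that the decomposition of Lemma~\ref{lem:dalphaMinimized} remains consistent when $D_\alpha(\mu_Q\|\sigma_Q)=\infty$. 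Both are routine, so the substantive work has already been front-loaded into the two preceding lemmas.
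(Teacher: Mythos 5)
Your proof is correct and matches the paper's approach: the paper likewise derives \eqref{eq:ten} and \eqref{eq:nine} directly from the decomposition of Lemma~\ref{lem:dalphaMinimized} together with the positivity (and equality condition) of $D_\alpha$ from Lemma~\ref{lem:dalphaIsPositive}, identifying $D_\alpha(\rho_{XQ}\|\rho_X\otimes\mu_Q)=\frac{\alpha}{\alpha-1}\log\tr\zeta_Q=\chi_\alpha(\rho_{XQ})$ exactly as you do. Your added care about support conditions and finiteness is a harmless bonus but not a departure from the paper's argument.
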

\begin{proof}
The fact that $\mr{\rho_{XQ}}$ achieves the minimum on the rhs.~of~\eqref{eq:ten} follows from Lemma~\ref{lem:dalphaMinimized} and the positivity of $D_\alpha$ shown in  Lemma~\ref{lem:dalphaIsPositive}. Inserting the definition of $\mr{\rho_{XQ}}$ into the expression on the rhs. of~\eqref{eq:ten} proves the validity of~\eqref{eq:ten}. Inequality~\eqref{eq:nine}  directly follows from~\eqref{eq:ten}, Lemma~\ref{lem:dalphaMinimized} and Lemma~\ref{lem:dalphaIsPositive}. 
\end{proof}
Note that by reinserting~\eqref{eq:ten} into the identity given in Lemma~\ref{lem:dalphaMinimized}, we obtain the identity
\begin{align}
\chi_\alpha(\rho_{XQ}) &+D_\alpha(\mr{\rho_{XQ}}\|\sigma_Q)=D_\alpha(\rho_{XQ}\|\rho_X\otimes\sigma_Q)\label{eq:eight}
\end{align}
for all $\sigma_Q\in\cS(\cH_Q)$. As a next step, we extend the cq-state $\rho_{XQ}$ by an additional classical symbol and show how $\chi_\alpha$ for the new state relates to the original quantity.
\begin{lemma}
Consider the cq-state $\rho_{XQ}$ of Eq.~\eqref{eq:cqstate} and let $\ket{\bot}$ be a normalized state on $\cH_X$ such that $\inp{\bot}{x} = 0$ for all $x \in \setX$.
Let $\rho_0 \in \states(\hil_Q)$ be arbitrary
 and consider the cq-state
$$
\rho_{XQ}'=(1-\eta)\rho_{XQ}+\eta \proj{\bot}\otimes\rho_0\ ,
$$
for some parameter $\eta\in[0,1]$. 
Then (for $\alpha > 1$)
\begin{align}\label{eq:fourteen}
\chi_\alpha(\rho_{XQ}')-\chi_\alpha(\rho_{XQ})\geq \eta \left(D_\alpha(\rho_0\|\mr{\rho_{XQ}'})-\chi_\alpha(\rho_{XQ})\right).
\end{align}
\end{lemma}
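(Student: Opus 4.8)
The plan is to transcribe the Schumacher--Westmoreland argument~\cite{schumacher:capacityVSrelative} to the R\'enyi setting, reducing~\eqref{eq:fourteen} to an inequality between traces of (non-integer) operator powers. Write $\mu':=\mr{\rho_{XQ}'}$, let $\rho_X'$ denote the $X$-marginal of $\rho_{XQ}'$, and abbreviate $Z:=\sum_x p_x\rho_x^\alpha$ and $Z':=(1-\eta)Z+\eta\rho_0^\alpha$, so that $\mu'=(Z')^{1/\alpha}/\tr((Z')^{1/\alpha})$ by Lemma~\ref{lem:dalphaMinimized}. I use throughout that $t\mapsto 2^{(\alpha-1)t}$ is strictly increasing and convex (because $\alpha>1$) and that $\tr(\rho^\alpha\sigma^{1-\alpha})=2^{(\alpha-1)D_\alpha(\rho\|\sigma)}$ by~\eqref{eq:dalphadef}. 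One may assume $0<\eta<1$; the cases $\eta=0$ and $\eta=1$ (where $\mu'=\rho_0$ and both sides of~\eqref{eq:fourteen} equal $-\chi_\alpha(\rho_{XQ})$) are immediate.

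The first step is the exact identity
\begin{align}\label{eq:planA}
2^{(\alpha-1)\chi_\alpha(\rho_{XQ}')}=(1-\eta)\tr(Z(\mu')^{1-\alpha})+\eta\, 2^{(\alpha-1)D_\alpha(\rho_0\|\mu')}\ .
\end{align}
Its proof combines two observations. First, by~\eqref{eq:ten} applied to $\rho_{XQ}'$, the minimum of $D_\alpha(\rho_{XQ}'\|\rho_X'\otimes\sigma_Q)$ over $\sigma_Q$ is attained at $\sigma_Q=\mu'$ and equals $\chi_\alpha(\rho_{XQ}')$, so the left-hand side of~\eqref{eq:planA} is $\tr((\rho_{XQ}')^\alpha(\rho_X'\otimes\mu')^{1-\alpha})$. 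Second, since $\ket{\bot}$ is orthogonal to all $\ket{x}$, both $\rho_{XQ}'$ and $\rho_X'$ are block-diagonal across the orthogonal $X$-subspaces $\{\ket{x}\}_x$ and $\ket{\bot}$, with weights $(1-\eta)p_x$ and $\eta$; taking $\alpha$-th and $(1-\alpha)$-th powers block by block, the prefactors $((1-\eta)p_x)^\alpha((1-\eta)p_x)^{1-\alpha}$ and $\eta^\alpha\eta^{1-\alpha}$ collapse to $(1-\eta)p_x$ and $\eta$, and taking the trace produces $(1-\eta)\sum_x p_x\tr(\rho_x^\alpha(\mu')^{1-\alpha})+\eta\tr(\rho_0^\alpha(\mu')^{1-\alpha})$, namely the right-hand side of~\eqref{eq:planA}.

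The second step bounds the first term on the right of~\eqref{eq:planA} by the \emph{unprimed} quantity. The identical block computation (now with no $\ket{\bot}$-block) gives $\tr(Z(\mu')^{1-\alpha})=\tr(\rho_{XQ}^\alpha(\rho_X\otimes\mu')^{1-\alpha})=2^{(\alpha-1)D_\alpha(\rho_{XQ}\|\rho_X\otimes\mu')}$, and inequality~\eqref{eq:nine} of Lemma~\ref{lem:chialphadalpha}, applied to the cq-state $\rho_{XQ}$ with the (generally non-optimal) choice $\sigma_Q=\mu'$, gives $D_\alpha(\rho_{XQ}\|\rho_X\otimes\mu')\ge\chi_\alpha(\rho_{XQ})$; hence $\tr(Z(\mu')^{1-\alpha})\ge 2^{(\alpha-1)\chi_\alpha(\rho_{XQ})}$. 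Substituting this into~\eqref{eq:planA} (using $1-\eta\ge0$) and then invoking Jensen's inequality for the convex map $t\mapsto 2^{(\alpha-1)t}$ yields
\begin{align*}
2^{(\alpha-1)\chi_\alpha(\rho_{XQ}')}&\ge(1-\eta)2^{(\alpha-1)\chi_\alpha(\rho_{XQ})}+\eta\,2^{(\alpha-1)D_\alpha(\rho_0\|\mu')}\\
&\ge 2^{(\alpha-1)((1-\eta)\chi_\alpha(\rho_{XQ})+\eta D_\alpha(\rho_0\|\mu'))}\ ;
\end{align*}
taking $\log$ (permissible as the function is increasing) and rearranging gives exactly~\eqref{eq:fourteen}.

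I expect the only genuinely delicate point to be the clean derivation of~\eqref{eq:planA}: one must carefully handle the non-integer powers of the block-diagonal operators $\rho_{XQ}'$ and $\rho_X'\otimes\mu'$ and check that the $\eta$- and $p_x$-dependent prefactors recombine as claimed (this is precisely where the definitions of $\chi_\alpha$ and of $\mr{\cdot}$ from Lemma~\ref{lem:dalphaMinimized} enter). The rest is standard relative-entropy bookkeeping together with Jensen's inequality. Support technicalities cause no difficulty: $\supp(\mu')=\supp(Z')\supseteq\supp(\rho_0)$, so $D_\alpha(\rho_0\|\mu')<\infty$, and $\chi_\alpha(\rho_{XQ})<\infty$ always, so~\eqref{eq:fourteen} is never a comparison of infinite quantities.
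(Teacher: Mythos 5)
Your proof is correct and it is essentially the paper's argument. The paper works in log scale: it writes $\chi_\alpha(\rho_{XQ}')=D_\alpha(\rho_{XQ}'\|\rho_X'\otimes\mu')$ via~\eqref{eq:ten}, expands this using the block structure of $\rho_{XQ}'$ into $\frac{1}{\alpha-1}\log\bigl((1-\eta)A+\eta B\bigr)$ with $A=\tr(\sum_x p_x\rho_x^\alpha(\mu')^{1-\alpha})$ and $B=\tr(\rho_0^\alpha(\mu')^{1-\alpha})$, applies concavity of $\log$ to split this into $(1-\eta)D_\alpha(\rho_{XQ}\|\rho_X\otimes\mu')+\eta D_\alpha(\rho_0\|\mu')$, and then bounds the first term by $\chi_\alpha(\rho_{XQ})$ via~\eqref{eq:nine}. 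Your version lives in exponential scale (your identity~\eqref{eq:planA} is exactly the same block-diagonal expansion), applies the $\chi_\alpha$-bound first, and then invokes convexity of $t\mapsto 2^{(\alpha-1)t}$ — which is the same single Jensen step as concavity of $\log$, just transposed. The two are the same proof with the last two steps reordered, and the verification of~\eqref{eq:planA} you flag as the delicate point is indeed the computation the paper does implicitly in passing from the $D_\alpha$-expression to the weighted trace.
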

\begin{proof}
For simplicity, set $\sigma_Q'=\mu_{\alpha,Q}(\rho_{XQ}')$, where $\mu_{\alpha,Q}$ is defined as in Lemma~\ref{lem:dalphaMinimized}. For $\alpha>1$ we then have (by the first part of Lemma~\ref{lem:chialphadalpha})
\begin{align*}
&\chi_\alpha(\rho_{XQ}')=D_\alpha(\rho_{XQ}'\|\rho_X'\otimes\sigma'_Q)\\
&=\frac{1}{\alpha-1} \log\tr\left((1-\eta)\cdot \sum_x P_X(x)\rho_x^\alpha \sigma_Q'^{1-\alpha} +\eta\cdot \rho_0^\alpha\sigma_Q'^{1-\alpha}\right)\\
&\geq (1-\eta) \frac{1}{\alpha-1}\log \tr\left(\sum_x P_X(x)\rho_x^\alpha \sigma_Q'^{1-\alpha}\right)\\
& \qquad\qquad+\eta\frac{1}{\alpha-1}\log\tr\left(\rho_0^\alpha\sigma_Q'^{1-\alpha}\right)\\
&=(1-\eta)D_\alpha (\rho_{XQ}\|\rho_X\otimes \sigma_Q')+\eta D_\alpha(\rho_0\|\sigma_Q')\ 
\end{align*}
Here we used the concavity of $\log$ to obtain the inequality.
In particular, bounding $D_\alpha(\rho_{XQ}\|\rho_X\otimes\sigma_Q')$ by $\chi_\alpha(\rho_{XQ})$ using~\eqref{eq:nine}, we get
\begin{align*}
\chi_\alpha(\rho_{XQ}') \geq (1-\eta)\chi_\alpha(\rho_{XQ}) +\eta D_\alpha(\rho_0\|\mr{\rho_{XQ}'})\ .
\end{align*}
This is the claim~\eqref{eq:fourteen}.
\end{proof}

\subsection{Optimal $\cA$-ensembles for $\chi_\alpha$}
We now restrict the quantum states $\rho_x$ to be in some subset $\cA\subseteq \cS(\hil_Q)$. For a fixed set $\cA\subseteq\cS(\hil_Q)$, we define an $\cA$-ensemble to be an ensemble $\{p_x,\rho_x\}_{x}$ where $\rho_x\in\cA$ for all $x\in\cX$. An $\cA$-cq-state $\rho_{XQ} $ is a cq-state defined by an $\cA$-ensemble. Our main focus is on the $\alpha$-Holevo quantity, maximized over all $\cA$-ensembles (or equivalently all $\cA$-cq-states), that is, the quantity
\begin{align*}
\chi_\alpha^*(\cA)& := \max_{\{p_x,\rho_x\in\cA\}_x} \chi_\alpha(\{p_x,\rho_x\}_x)\ .
\end{align*}

We can show a maximal distance property similar to the one derived in~\cite{schumacher:capacityVSrelative} for the Holevo-quantity $\chi$ and the relative entropy $D$.
\begin{lemma}\label{lem:maximaldistanceprop}
Let $\cA\in\cS(\cH)$ be some set of states, and suppose the $\cA$-cq-state $\rho_{XQ}^*$ achieves the maximum of $\chi_{\alpha}$, that is, $\chi_\alpha(\rho_{XQ}^*)=\chi_\alpha^*(\cA)$. 
Then 
\begin{align*}
D_\alpha(\rho_0\|\mr{\rho_{XQ}^*}) \leq \capa_\alpha(\cA)\qquad\textrm{ for any state }\rho_0\in\cA\ .
\end{align*}
\end{lemma}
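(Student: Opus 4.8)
The plan is to argue by contradiction: suppose some state $\rho_0 \in \cA$ violates the bound, i.e.\ $D_\alpha(\rho_0 \| \mr{\rho_{XQ}^*}) > \chi_\alpha^*(\cA)$. I would then modify the optimal $\cA$-cq-state $\rho_{XQ}^*$ by mixing in a small amount of $\rho_0$ tagged with a fresh classical symbol $\ket{\bot}$, exactly as in the preceding lemma: set $\rho_{XQ}' = (1-\eta)\rho_{XQ}^* + \eta \proj{\bot}\otimes \rho_0$ for small $\eta \in (0,1]$. Note that $\rho_{XQ}'$ is again an $\cA$-cq-state (its quantum parts are $\rho_0$ and the $\rho_x$'s, all in $\cA$), so by optimality of $\rho_{XQ}^*$ we must have $\chi_\alpha(\rho_{XQ}') \leq \chi_\alpha^*(\cA) = \chi_\alpha(\rho_{XQ}^*)$, hence $\chi_\alpha(\rho_{XQ}') - \chi_\alpha(\rho_{XQ}^*) \leq 0$.

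On the other hand, inequality~\eqref{eq:fourteen} of the previous lemma (applied with $\rho_{XQ} = \rho_{XQ}^*$) gives
\begin{align*}
\chi_\alpha(\rho_{XQ}') - \chi_\alpha(\rho_{XQ}^*) \geq \eta\left(D_\alpha(\rho_0 \| \mr{\rho_{XQ}'}) - \chi_\alpha(\rho_{XQ}^*)\right).
\end{align*}
Combining the two displays yields $D_\alpha(\rho_0 \| \mr{\rho_{XQ}'}) \leq \chi_\alpha(\rho_{XQ}^*) = \chi_\alpha^*(\cA)$ for every $\eta \in (0,1]$. The remaining task is a limiting argument: as $\eta \to 0^+$, the state $\rho_{XQ}' \to \rho_{XQ}^*$, and one checks that $\mr{\rho_{XQ}'} \to \mr{\rho_{XQ}^*}$ (the map $\eta \mapsto \mr{\rho_{XQ}'}$ is continuous, since $\zeta_Q = (\sum_x p_x \rho_x^\alpha)^{1/\alpha}$ depends continuously on the ensemble and the operator $x \mapsto x^{1/\alpha}$ is continuous on positive semidefinite matrices). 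By lower semicontinuity (indeed continuity, away from the support issue) of $D_\alpha(\rho_0 \| \cdot)$, we get $D_\alpha(\rho_0 \| \mr{\rho_{XQ}^*}) \leq \liminf_{\eta\to 0^+} D_\alpha(\rho_0\|\mr{\rho_{XQ}'}) \leq \chi_\alpha^*(\cA)$, contradicting our assumption.

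The main obstacle I anticipate is the continuity/support bookkeeping in the limit $\eta \to 0^+$. The quantity $D_\alpha(\rho_0 \| \sigma)$ is defined to be $+\infty$ when $\supp(\sigma) \not\supseteq \supp(\rho_0)$, so one must be slightly careful that $\mr{\rho_{XQ}'}$ does not have a too-small support as $\eta \to 0$. In fact this is not a problem here: for every $\eta > 0$ the symbol $\ket{\bot}$ genuinely appears, so $\zeta_Q$ for $\rho_{XQ}'$ includes the term $\eta\,\rho_0^\alpha$, which guarantees $\supp(\rho_0) \subseteq \supp(\zeta_Q) = \supp(\mr{\rho_{XQ}'})$; thus $D_\alpha(\rho_0 \| \mr{\rho_{XQ}'})$ is finite along the whole sequence, and the finite-dimensional continuity of $D_\alpha$ on pairs with the appropriate support containment gives the conclusion. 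One could alternatively avoid the limit entirely by noting that for $\rho_0 \in \cA$ with $\supp(\rho_0) \subseteq \supp(\mr{\rho_{XQ}^*})$ the function $\eta \mapsto D_\alpha(\rho_0 \| \mr{\rho_{XQ}'})$ extends continuously to $\eta = 0$ with value $D_\alpha(\rho_0\|\mr{\rho_{XQ}^*})$, and handle the case $\supp(\rho_0) \not\subseteq \supp(\mr{\rho_{XQ}^*})$ separately (where one shows it cannot occur for a maximizer, or the bound is interpreted appropriately). Either way, the inequalities~\eqref{eq:fourteen} and optimality of $\rho_{XQ}^*$ do all the real work.
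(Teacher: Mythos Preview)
Your proof is correct and follows essentially the same approach as the paper: argue by contradiction, perturb the optimal ensemble by mixing in $\rho_0$ with a fresh classical label, invoke inequality~\eqref{eq:fourteen}, and use continuity of $\mu_{\alpha,Q}$ as $\eta\to 0$. The only cosmetic difference is the order of steps---the paper first uses continuity to pick an $\eta$ with $D_\alpha(\rho_0\|\mr{\rho_{XQ}'})>\chi_\alpha^*(\cA)$ and then applies~\eqref{eq:fourteen}, whereas you apply~\eqref{eq:fourteen} for all $\eta$ and then pass to the limit; your additional bookkeeping on supports is more careful than the paper's bare appeal to continuity.
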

\begin{proof}
Assume that there exists a state $\rho_0\in\cA$ such that 
\begin{align}
D_\alpha(\rho_0\|\mr{\rho_{XQ}^*}) > \capa_\alpha(\cA)\ .\label{eq:dalphaassump}
\end{align}
Consider the state $\rho_{XQ}'=(1-\eta)\rho_{XQ}^{*}+\eta\proj{\bot}\otimes\rho_0$ for $0\leq\eta\leq 1$. Observe that this is a $\cA$-cq-state.  As $\eta\rightarrow 0$, we have $D_\alpha(\rho_0\|\mr{\rho_{XQ}'})\rightarrow D_\alpha(\rho_0\|\mr{\rho_{XQ}^*})$  by continuity. In particular, by~\eqref{eq:dalphaassump}, there is a value of $\eta$ such that
\begin{align*}
D_\alpha(\rho_0\|\mr{\rho_{XQ}'})>\chi_\alpha^*(\cA)\ .
\end{align*}
Combining this with~\eqref{eq:fourteen} leads to the contradiction
\begin{align*}
\chi_\alpha(\rho'_{XQ})>\chi_\alpha(\rho^{*}_{XQ})=\chi_\alpha^*(\cA)\ .
\end{align*}
\end{proof}

We are now ready to prove the following lemma. Note that~\eqref{eq:chiD} of Lemma~\ref{lem:evaluation} corresponds to the special case where 
$\cA=\{\Lambda(\rho) \mid \rho \in \states(\hin)\}$ is chosen as the set of potential output states of the channel~$\Lambda$.
\begin{lemma}\label{lem:generalevaluation}
Let $\cA\subseteq\cS(\cH)$  be a set of states and $\alpha\geq 1$. Then
\begin{align}
\min_{\sigma\in\cS(\cH)} \max_{\rho\in\cA} D_\alpha(\rho\|\sigma) \leq \chi_\alpha^*(\cA)\leq \min_{\sigma\in\cA}\max_{\rho\in\cA} D_\alpha(\rho\|\sigma) .\label{eq:chidgeneralized}
\end{align}
\end{lemma}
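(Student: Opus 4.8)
The plan is to establish the two inequalities in~\eqref{eq:chidgeneralized} separately, with the lower bound (left inequality) being the substantive part and the upper bound (right inequality) essentially immediate from what has already been proven. For the \emph{upper bound} $\chi_\alpha^*(\cA)\leq \min_{\sigma\in\cA}\max_{\rho\in\cA} D_\alpha(\rho\|\sigma)$, I would argue as follows: let $\rho_{XQ}^*$ be an $\cA$-cq-state achieving $\chi_\alpha(\rho_{XQ}^*) = \chi_\alpha^*(\cA)$ (existence follows by compactness, which I would note). By Lemma~\ref{lem:chialphadalpha}, for \emph{any} $\sigma_Q\in\cS(\cH_Q)$ we have $\chi_\alpha(\rho_{XQ}^*) \leq D_\alpha(\rho_{XQ}^*\|\rho_X^*\otimes\sigma_Q)$, and by the ensemble decomposition inherent in $D_\alpha$ for cq-states (as used throughout Section on $\alpha$-R\'enyi quantities), $D_\alpha(\rho_{XQ}^*\|\rho_X^*\otimes\sigma_Q) = \frac{1}{\alpha-1}\log\tr(\sum_x p_x\rho_x^\alpha\sigma_Q^{1-\alpha})$. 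Now using concavity of $\log$ and the fact that each $\rho_x\in\cA$, I would bound this by $\max_{x} D_\alpha(\rho_x\|\sigma_Q) \leq \max_{\rho\in\cA} D_\alpha(\rho\|\sigma_Q)$; specializing to any fixed $\sigma\in\cA$ and then taking the minimum over such $\sigma$ gives the claim. (Alternatively, even more directly: choosing the ensemble consisting of a single state $\rho_0\in\cA$ shows $D_\alpha(\rho_0\|\sigma)$ is irrelevant — rather, one directly notes $\chi_\alpha(\rho_{XQ}) = D_\alpha(\rho_{XQ}\|\rho_X\otimes\mu_Q)$ with $\mu_Q = \mr{\rho_{XQ}}\in\mathrm{co}(\cA)$, but since we want $\sigma\in\cA$ not $\mathrm{co}(\cA)$, the concavity argument is the clean route.)

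For the \emph{lower bound} $\min_{\sigma\in\cS(\cH)}\max_{\rho\in\cA}D_\alpha(\rho\|\sigma) \leq \chi_\alpha^*(\cA)$, the key is Lemma~\ref{lem:maximaldistanceprop}. Let $\rho_{XQ}^*$ achieve $\chi_\alpha(\rho_{XQ}^*) = \chi_\alpha^*(\cA)$, and set $\mu_Q^* = \mr{\rho_{XQ}^*}$. By Lemma~\ref{lem:maximaldistanceprop}, $D_\alpha(\rho_0\|\mu_Q^*)\leq \chi_\alpha^*(\cA)$ for every $\rho_0\in\cA$, hence $\max_{\rho\in\cA}D_\alpha(\rho\|\mu_Q^*)\leq\chi_\alpha^*(\cA)$. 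Since $\mu_Q^*$ is one particular state in $\cS(\cH)$, the minimum over all $\sigma\in\cS(\cH)$ of $\max_{\rho\in\cA}D_\alpha(\rho\|\sigma)$ is at most this value, which is exactly the desired inequality.

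Finally I would address the boundary case $\alpha = 1$, where the statement is claimed to hold as well: here $D_1 = D$ is the ordinary relative entropy and $\chi_1^* = \chi^*$, and~\eqref{eq:chidgeneralized} reduces to the known result of Schumacher and Westmoreland~\cite{schumacher:capacityVSrelative}; one can either cite this directly or obtain it by taking the limit $\alpha\downarrow 1$ in the $\alpha>1$ case, using continuity of $D_\alpha$ and $\chi_\alpha$ in $\alpha$ (the quantities are manifestly continuous away from singularities of the support condition, and on the relevant compact domains the limit passes through the min and max).

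The main obstacle is ensuring the existence of the maximizer $\rho_{XQ}^*$ and handling the support/invertibility subtleties in $D_\alpha$: one must check that the maximum defining $\chi_\alpha^*(\cA)$ is attained (Carath\'eodory-type bound on the number of ensemble elements plus compactness of $\cA$ if $\cA$ is closed, which holds in our application where $\cA$ is the image of the compact state space under the continuous map $\Lambda$), and that $\mu_Q^* = \mr{\rho_{XQ}^*}$ has support containing that of each $\rho_x$ so that $D_\alpha(\rho_x\|\mu_Q^*)$ is finite — but this is automatic since $\supp(\mu_Q^*) = \supp(\sum_x p_x\rho_x^\alpha) \supseteq \supp(\rho_x)$ for each $x$ with $p_x>0$. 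Once these points are dispatched, the argument is just a matter of assembling Lemmas~\ref{lem:chialphadalpha} and~\ref{lem:maximaldistanceprop}.
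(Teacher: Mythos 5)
Your proof takes the same route as the paper's: Lemma~\ref{lem:chialphadalpha} for the upper bound, Lemma~\ref{lem:maximaldistanceprop} together with the observation that $\mr{\rho_{XQ}^*}\in\cS(\cH)$ for the lower bound. The lower bound argument is exactly right, and your attention to the existence of the maximizer, the support condition, and the $\alpha=1$ boundary (which the paper glosses over even though Lemma~\ref{lem:maximaldistanceprop} is derived only for $\alpha>1$) is appropriate.

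One correction in the upper bound: you justify
\begin{align*}
\frac{1}{\alpha-1}\log\tr\Bigl(\sum_x p_x\rho_x^\alpha\sigma^{1-\alpha}\Bigr)\;\le\;\max_x D_\alpha(\rho_x\|\sigma)
\end{align*}
by ``concavity of $\log$,'' but Jensen's inequality for a concave function points the \emph{other} way, i.e.\ $\log\bigl(\sum_x p_x a_x\bigr)\ge\sum_x p_x\log a_x$, which yields a lower bound by the \emph{average} of the $D_\alpha$'s, not an upper bound by their maximum. The step you want is elementary and needs no convexity at all: $\sum_x p_x\tr(\rho_x^\alpha\sigma^{1-\alpha})\le\max_x\tr(\rho_x^\alpha\sigma^{1-\alpha})$ because an average is at most the maximum, and then monotonicity of $\log$ combined with $\alpha-1>0$ gives the claimed bound. (The paper's proof phrases the same step as: the maximum of the linear functional $\{\tilde p_x\}\mapsto\sum_x\tilde p_x\tr(\tilde\rho_x^\alpha\sigma^{1-\alpha})$ over the simplex is attained at a vertex.) Your final inequality is correct; only the stated reason is off. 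As a side remark, the parenthetical alternative you mention is not quite right either: $\mr{\rho_{XQ}}$ normalizes $\bigl(\sum_x p_x\rho_x^\alpha\bigr)^{1/\alpha}$, which is generally not a convex combination of the $\rho_x$, so it need not lie in $\mathrm{co}(\cA)$; but this does not affect the main line of your argument.
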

\begin{proof}
Consider an arbitrary $\cA$-cq-state $\rho_{XQ}$. We show that for any $\sigma \in \setA$, the quantity $\max_{\rho \in \setA} D_\alpha(\rho\|\sigma)$ is an upper bound on any quantity $\chi_\alpha(\rho_{XQ})$. Indeed, by~\eqref{eq:nine}, we have
\begin{align*}
\chi_\alpha(\rho_{XQ}) &\leq D_\alpha(\rho_{XQ}\|\rho_X\otimes\sigma)\\
&\leq \max_{\tilde{\rho}_{XQ}\textrm{ $\cA$-cq state}} D_\alpha(\tilde{\rho}_{XQ}\|\tilde{\rho}_X\otimes\sigma)\\
&=\frac{1}{\alpha-1}\log\max_{\{\tilde{p}_x,\tilde{\rho}_x \in \setA\}}\tr\left(\sum_x \tilde{p}_x\tilde{\rho}_x^\alpha\sigma^{1-\alpha}\right)\\
&=\frac{1}{\alpha-1}\log\max_{\rho \in \setA} \tr(\rho^\alpha\sigma^{1-\alpha})\\
&=\max_{\rho \in \setA} D_\alpha(\rho\|\sigma)\ .
\end{align*}
The upper bound in~\eqref{eq:chidgeneralized} follows from this by taking the minimum over $\sigma\in\cA$.

We know from Lemma~\ref{lem:maximaldistanceprop} that 
\begin{align}
\capa_\alpha(\cA) &\geq \max_{\rho \in \setA} D_\alpha(\rho\|\mr{\rho_{XQ}^*}) \ ,\label{eq:vx}
\end{align}
where $\rho_{XQ}^*$ is the state that achieves the optimum in~$\chi^*(\cA)$. Observe that $\mr{\rho_{XQ}^*}\in\cS(\cH)$ is  a state (but not necessarily an element of $\cA$). Therefore, the lower bound in~\eqref{eq:chidgeneralized} follows from~\eqref{eq:vx}.
\end{proof}

\section{Proof of Lemma~\ref{lem:errorexponentbound}}\label{sec:exponent}
\setcounter{section}{3}
We separate the two parts of the proof of this lemma, first addressing the general bound on the error exponent in terms of the regularized Holevo quantities.
\begin{proof}[Proof of part~\eqref{it:firstpartoperational}]
Consider a CPTPM $\Phi:\bounded(\cH_{in})\rightarrow\bounded(\cH_{out})$. 
Fix a set of states $\{\rho_x\}_{x=1}^{2^{nR}}\subset\states(\hin^{\otimes n})$ and a POVM $\{E_x\}_{x=1}^{2^{nR}}$ on $\hout^{\otimes n}$, and  consider the success probability $P_{succ}^{\Phi}(n,R)$ defined by~\eqref{eq:success}.
Let $\sigma_x=\Phi^{\otimes n}(\rho_x)$. Since $y\mapsto y^{1/\alpha}$ is operator monotone for $\alpha > 1$ (see e.g.,~\cite[Theorem V.1.9]{bathia:ma}), we have the operator inequality
\begin{align*}
\sigma_x=(\sigma_x^\alpha)^{1/\alpha}\leq \left(\sum_{x'} \sigma_{x'}^{\alpha}\right)^{1/\alpha}\ 
\end{align*}
for all $x$.
Inserting this into~\eqref{eq:success} gives
\begin{align*}
P_{succ}^{\Phi}(n,R)&\leq 2^{-nR}\sum_{x} \tr\left[E_x  \left(\sum_{x'} \sigma_{x'}^{\alpha}\right)^{1/\alpha}\right]\\
&\leq 2^{-nR}\tr\left[\left(\sum_x \sigma_x^{\alpha}\right)^{1/\alpha}\right]\\
&=2^{\frac{\alpha-1}{\alpha} (-nR+\chi_\alpha(\{2^{-nR},\sigma_x\}_x)}\\
&\leq 2^{\frac{\alpha-1}{\alpha} (-nR+\chi_\alpha^*(\Phi^{\otimes n}))}\ .
\end{align*}
Here we used the operator inequality $E_x\leq \id$ for POVM elements in the first step and the definition of $\chi_\alpha$
applied to the ensemble defined by $\{\sigma_x=\Phi^{\otimes n}(\rho_x)\}_x$ together with the uniform distribution on $\{1,\ldots,2^{nR}\}$. Since both the set of states and the POVM were arbitrary, the claim follows from definition~\eqref{eq:errorexponent}.
\end{proof}

\begin{proof}[Proof of part~\eqref{it:secondpartoperational}]
Substituting $\alpha=1/(s+1)$ for $-1<s<0$ gives
\begin{align}
\frac{\alpha-1}{\alpha}(R-\chi_\alpha(\{p_x,\Phi(\rho_x)\})&=-sR+E_0(s,\{p_x,\Phi(\rho_x)\})\ ,\label{eq:reform}
\end{align}
where 
\begin{align*}
E_0(s,\{p_x,\sigma_x\}_x) &:=s\cdot \chi_{1/(s+1)}(\{p_x,\sigma_x\}_x)\ .
\end{align*}
In~\cite[Lemma~3]{ogawa:converse}, it is shown that
for all families of states~$\{\sigma_x\}_x$, and all
$R>\max_{\{q_x\}_x} \chi(\{q_x,\sigma_x\}_x)$, we have
\begin{align*}
\exists t<0: -sR+\min_{\{q_x\}_x}E_0(s,\{q_x,\sigma_x\}_x)>0\qquad \forall s\in (t,0)\ ,
\end{align*}
where the maximum and minimum are over all probability distributions $\{q_x\}$.  Let $\{p_x,\rho_x\}_x$ be the ensemble which achieves the maximum  in the definition of $\chi^*(\Phi)$, and set $\sigma_x:=\Phi(\rho_x)$. With~\eqref{eq:reform} and the previous statement, we conclude that for all $R>\chi^*(\Phi)$, there exists $\beta>1$ such that 
\begin{align*}
\frac{\alpha-1}{\alpha} (R-\chi_\alpha^*(\Phi))>0\qquad\forall \alpha\in (1,\beta)\ .
\end{align*}
This is part~\eqref{it:secondpartoperational} of the claim since $\alpha>1$. 
\end{proof}

\section{An additional technical lemma}
\setcounter{section}{4}

\begin{lemma}\label{lem:majorize}
Let $A\geq 0$ and $B\geq 0$ be two positive semi-definite operators on $\Complex^{d \times d}$
with eigenvalues $\lambda^A=(\lambda^A_1, \ldots, \lambda^A_d)$ and $\lambda^B=(\lambda^B_1, \ldots, \lambda^B_d)$. Then
there exist
permutations $\pi_{\min},\pi_{\max} \in S_d$ such that
for all unitaries~$U$
$$
\sum_{j=1}^d \lambda^{A}_{\pi_{\min}(j)} \lambda^{B}_j \leq \Tr(UAU^\dagger B) \leq \sum_{j=1}^d \lambda^{A}_{\pi_{\max}(j)} \lambda^{B}_j.
$$
\end{lemma}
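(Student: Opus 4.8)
The plan is to reduce the statement to Birkhoff's theorem, exactly along the lines of the argument already used in the proof of Lemma~\ref{lem:dalphaIsPositive}. The key observation is that $\Tr(UAU^\dagger B)$, viewed as a function of $U$, is always a convex combination (with $U$-dependent weights) of the finitely many numbers obtained by pairing the eigenvalues of $A$ with those of $B$ according to a permutation; the extreme values of this finite family then furnish $\pi_{\min}$ and $\pi_{\max}$.

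Concretely, I would first fix orthonormal eigenbases $\{\ket{a_i}\}$ of $A$ and $\{\ket{b_j}\}$ of $B$, write $A=\sum_i\lambda^A_i\proj{a_i}$ and $B=\sum_j\lambda^B_j\proj{b_j}$, and expand
\begin{align*}
\Tr(UAU^\dagger B) &= \sum_{i,j} \lambda^A_i \lambda^B_j\, \Omega_{ji}, \qquad \Omega_{ji} := |\bra{b_j} U \ket{a_i}|^2 .
\end{align*}
A one-line computation using the completeness of the two eigenbases and $UU^\dagger = U^\dagger U = \id$ shows that $\Omega$ is doubly stochastic. Birkhoff's theorem (\cite[Theorem 8.7.1]{horn&johnson:ma}) then lets us write $\Omega = \sum_{\pi \in S_d} p(\pi)\, P_\pi$ with $(P_\pi)_{ji} = \delta_{i,\pi(j)}$ and $p$ a probability distribution on $S_d$, so that by linearity
\begin{align*}
\Tr(UAU^\dagger B) &= \sum_{\pi \in S_d} p(\pi)\, f(\pi), \qquad f(\pi) := \sum_{j=1}^d \lambda^A_{\pi(j)}\lambda^B_j .
\end{align*}
Hence $\Tr(UAU^\dagger B)$ lies between $\min_{\pi\in S_d} f(\pi)$ and $\max_{\pi\in S_d} f(\pi)$, and I would take $\pi_{\min}, \pi_{\max}$ to be permutations attaining these two extrema, which exist since $S_d$ is finite.

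I do not expect a genuine obstacle here. The only point requiring care is that $\pi_{\min}$ and $\pi_{\max}$ must be chosen independently of $U$; this holds because the extremal problem $\min/\max_{\pi\in S_d} f(\pi)$ involves only the fixed eigenvalue lists $\lambda^A,\lambda^B$, with all the $U$-dependence absorbed into the Birkhoff weights $p(\pi)$. (If one wants explicit choices, the rearrangement inequality identifies $\pi_{\max}$ as the permutation co-sorting the two spectra and $\pi_{\min}$ as the one reverse-sorting them, but mere existence is all that is needed for the application in Lemma~\ref{lem:dalphaIsPositive}.)
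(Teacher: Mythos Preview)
Your argument is correct and is essentially the same as the paper's: both show that $\Tr(UAU^\dagger B)$ is a convex combination, indexed by $S_d$, of the pairing values $f(\pi)=\sum_j\lambda^A_{\pi(j)}\lambda^B_j$, and then pick $\pi_{\min},\pi_{\max}$ as the extremizers of $f$. The only cosmetic difference is the classical tool invoked: you apply Birkhoff's theorem directly to the doubly stochastic matrix $\Omega_{ji}=|\bra{b_j}U\ket{a_i}|^2$, whereas the paper first passes to the diagonal vector $v_j=\bra{b_j}UAU^\dagger\ket{b_j}=(\Omega\lambda^A)_j$, appeals to Schur's theorem to get that $\lambda^A$ majorizes $v$, and then uses Hardy--Littlewood--P\'olya to write $v$ as a convex combination of permutations of $\lambda^A$; since Schur together with HLP is equivalent to Birkhoff in this setting, the two routes coincide.
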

\begin{proof}

Let $v = (v_1,\ldots,v_d)$ be the vector of diagonal entries of the matrix $UAU^\dagger$ in a basis consisting of normalized eigenvectors of $B$. A  well-known result by Schur (see e.g.,~\cite[Theorem 4.3.26]{horn&johnson:ma}) states that  the vector of diagonal entries of a nonnegative matrix majorizes the vector of its eigenvalues. Applied to $UAU^\dagger$, we conclude that $v$ majorizes $\lambda^A$.
  Another classical theorem by Hardy, Littlewood and P\'olya (see e.g.,~\cite[Theorem 4.3.33]{horn&johnson:ma})  then shows that there exists a probability distribution~$P$ (depending on $U$) over the group of permutations $S_d$ such that
$$
(v_1,\ldots,v_d) = \sum_{\pi \in S_d} P(\pi) (\lambda^A_{\pi(1)},\ldots,\lambda^A_{\pi(d)}).
$$
The claim now follows
by observing that $\Tr(UAU^\dagger B) = v \cdot \lambda^B$, where $v \cdot \lambda^B$ denotes the Euclidean inner product of vectors $v$ and $\lambda^B$.
\end{proof}

\end{document}